\newtheorem{theorem}{Theorem}[section]
\newtheorem{lemma}[theorem]{Lemma}
\newtheorem{claim}[theorem]{Claim}
\newtheorem{definition}[theorem]{Definition}
\newtheorem{observation}[theorem]{Observation}
\theoremstyle{remark}
\newcommand{\bR}{{\mathbb{R}}}
\newcommand{\MMS}{\mathsf{MMS}}
\newcommand{\sw}{\mathsf{sw}}
\newcommand{\bX}{\mathbf{X}}
\newcommand{\cI}{\mathcal{I}}
\newcommand{\cG}{\mathcal{G}}
\newcommand{\cE}{\mathcal{E}}
\newcommand{\w}{\mathbf{w}}
\title{Fair Graphical Resource Allocation with \\ Matching-Induced Utilities}
\author{Zheng Chen$^1$ \hspace{30pt} Bo Li$^2$ \hspace{30pt} Mingming Li$^3$  \hspace{30pt} Guochuan Zhang$^1$\\
$^1$College of Computer Science and Technology, Zhejiang University, Hangzhou, China\\
\texttt{\{21721122, zgc\}@zju.edu.cn}\\
$^2$ Department of Computing, The Hong Kong Polytechnic University, Hong Kong\\
\texttt{comp-bo.li@polyu.edu.hk}\\
$^3$ Department of Computer Science, City University of Hong Kong, Hong Kong\\
\texttt{minming.li@cityu.edu.hk}
}
\date{April 2022}
\begin{document}
\maketitle

\sloppy

\begin{abstract}
Motivated by real-world applications, we study the fair allocation of graphical resources, where 
the resources are the vertices in a graph. 
Upon receiving a set of resources, an agent's utility equals the weight of a maximum matching in the induced subgraph. 
We care about maximin share (MMS) fairness and envy-freeness up to one item (EF1).
Regarding MMS fairness, the problem does not admit a finite approximation ratio for heterogeneous agents.
For homogeneous agents, we design constant-approximation polynomial-time algorithms, and also note that significant amount of social welfare is sacrificed inevitably in order to ensure (approximate) MMS fairness.
We then consider EF1 allocations whose existence is guaranteed.  
However, the social welfare guarantee of EF1 allocations cannot be better than $1/n$ for the general case, where $n$ is the number of agents.
Fortunately, for three special cases, binary-weight, two-agents and homogeneous-agents, we are able to design polynomial-time algorithms that also ensure a constant fractions of the maximum social welfare.
\end{abstract}
\section{Introduction}
Resource allocation has been actively studied due to its practical applications \citep{DBLP:books/daglib/0017734,DBLP:journals/sigecom/GoldmanP14,flanigan2021fair}. 
Traditionally, the utilities are assumed to be additive which means an agent's value for a bundle of resources equals the sum of each single item's marginal utility. 
But in many real-world problems, the resources have graph structures and thus the agents' utilities are not additive but depend on the structural properties of the received resources. 
For example, Peer Instruction (PI) has been shown to be an effective learning approach based on a project conducted at Harvard University, and one of the simplest ways to implement PI is to pair the students \citep{crouch2001peer}.
Consider the situation when we partition students to advisors, where the advisors will adopt PI for their assigned students. 
Note that the advisors may hold different perspectives on how to pair the students based on their own experience and expertise, and they want to maximize the efficiency of conducting PI in their own assigned students. 
How should we assign the students fairly to the advisors?
How can we maximize the social welfare among all (approximately) fair assignments?
In this work, we take an algorithm design perspective to solve these two questions.
Similar pairwise joint work also appears as long-trip coach driver vs. co-driver and accountant vs. cashier, which is widely investigated in matching theory \citep{lovasz2009matching}. 

The graphical nature of resources has been considered in the literature (see, e.g., \citep{DBLP:conf/ijcai/BouveretCEIP17,DBLP:journals/dam/Suksompong19,DBLP:conf/innovations/BiloCFIMPVZ19,DBLP:conf/aaai/IgarashiP19}).
In this line of research, the graph is used to characterize feasible allocations, such as the resources allocated to each agent should be connected, but the agents still have additive utilities over allocated items. 
We refer the readers to \citep{DBLP:journals/sigecom/Suksompong21} for a comprehensive survey of constrained fair division. 
As shown by the previous PI an other examples, with graphical resources, the value of a set of resources does not solely depend on the vertices or the edge weights, but decided by the combinatorial structure of the subgraph, such as the maximum matching in our problem.

Our problem also aligns the research of balanced graph partition \citep{DBLP:journals/eor/MiyazawaMOW21}.  
Although there are heuristic algorithms in the literature \citep{DBLP:journals/eor/KressMP15,DBLP:journals/dam/BarketauPS15} that partition a graph when the subgraphs are evaluated by maximum matchings,
these algorithms do not have theoretical guarantees.
Our first fairness criterion is the {\em maximin share} (MMS) fairness proposed by \cite{budish2011combinatorial}, which generalizes the max-min objective in Santa Claus problem \citep{DBLP:conf/stoc/BansalS06}.
Informally, the MMS value of an agent is her best guarantee if she is to partition the graph into several subgraphs but receives the worst one.
We aim at designing efficient algorithms with provable approximation guarantees. 
As will be clear later, to achieve (approximate) MMS fairness, a significant amount of social welfare has to be inevitably sacrificed. 
Our second fairness notion is {\em envy-freeness} (EF) \citep{foley1967resource}.
In an EF allocation, no agent prefers the allocation of another agent to her own.
Since the resources are indivisible, such an allocation barely exists, and recent research in fair division focuses on achieving its relaxations instead.
One of the most widely accepted and studied relaxations is {\em envy-freeness up to one item} (EF1) \citep{budish2011combinatorial}, which requires the envy to be eliminated after removing one item. 
\cite{DBLP:conf/sigecom/LiptonMMS04} proved that an EF1 allocation always exists even with combinatorial valuations.\footnote{The algorithm in \citep{DBLP:conf/sigecom/LiptonMMS04}  was originally published in 2004 with a different targeting property. In 2011, \citet{budish2011combinatorial} formally proposed the notion of EF1 fairness.}
It is noted that an arbitrary EF1 allocation may have low social welfare, and our goal is to compute an EF1 allocation which preserves a large fraction of the maximum social welfare without fairness constraints.
The social welfare loss by enforcing the allocations to be EF1 is quantified by {\em price of EF1} \citep{DBLP:journals/mst/BeiLMS21}.

\subsection{A Summary of Results}

We study the fair allocation of graphical resources when the resources are indivisible and correspond to the {\em vertices} in the graph, and the agents' valuations are measured by the weight of the maximum matchings in the induced subgraphs.
The fairness of an allocation is measured by maximin share (MMS) and envy-free up to one item (EF1).
Our model strictly generalizes the additive setting: it degenerates to the additive setting when the graph consists of a set of independent edges by regarding each edge as an item whose value is the weight of the edge.
This is because the removal of a vertex also removes the adjacent edge.
We aim at designing efficient algorithms that
compute fair allocations with high social welfare.
Our main results are summarized as follows.

We first consider MMS fairness and find that no algorithm has bounded approximation ratio even if there are two agents with binary weights. 
We thus focus on the homogeneous case when the agents have identical valuations. 
Then our problem degenerates to the max-min objective, i.e., partitioning the vertices so that the minimum weight of the maximum matchings in the subgraphs is maximized. 
It is easy to see that an MMS fair allocation always exists but finding it is NP-hard. 
Accordingly, we design a polynomial-time $1/8$-approximation algorithm for arbitrary number of agents, and show that when the problem only involves two agents, the approximation ratio can be improved to $2/3$.
It is noted that, to ensure any finite approximation of MMS fairness, significant amount of social welfare is inevitably sacrificed.

We then study EF1 allocation whose existence is guaranteed \cite{DBLP:conf/sigecom/LiptonMMS04}.
We prove that there exist instances for which none of the EF1 allocations can ensure better than $1/n$ fraction of the maximum social welfare.
But this result does not exclude the possibility of constant approximations for special cases.
In particular, we consider three cases: (1) binary-weight functions, (2) two-agents, (3) homogeneous-agents. 
For each setting, we design polynomial-time algorithms that compute EF1 allocations whose social welfare is at least a constant fraction of the maximum social welfare that can be achieved without fairness constraints.

\subsection{Related Works}

Two separate research lines are closely related to our work, namely graph partition and fair division. 


{\em Graph Partition.}
Partitioning graphs into  balanced subgraphs has been extensively studied in operations research \citep{DBLP:journals/eor/MiyazawaMOW21} and computer science \citep{DBLP:series/lncs/BulucMSS016}. There are several popular objectives for evaluating whether a partition is balanced. 
Among the most prominent ones are the max-min (or min-max) objectives,
where the goal is to maximize (or minimize) the total weight of the minimum (or maximum) part. 
Particularly, the vehicle routing problem (VRP) \citep{DBLP:journals/eor/KocBJL16a}, which generalizes the travelling salesperson problem (TSP), is closely related to our work.
It asks for an optimal set of routes for a number of vehicles, 
to visit a set of customers. 
There are a number of popular variants for the VRP, e.g., the so called heterogeneous vehicle routing problem
\citep{DBLP:journals/mp/Yaman06,DBLP:conf/swat/Rathinam0BS20}. 
There are many other combinatorial structures studied in graph partitioning problems. 
For example, in the min-max tree cover (a.k.a. nurse station location) problem, the task is to use trees to cover an edge-weighted graph such that the largest tree is minimized \citep{DBLP:journals/algorithmica/KhaniS14}. 
This problem also falls under the umbrella of a more general problem, the  graph covering problem, 
where a set of pairwise disjoint subgraphs (called templates) is used to cover a given graph, such as paths \citep{DBLP:journals/disopt/FarbsteinL15}, cycles \citep{DBLP:conf/ipco/TraubT20},  and matchings \citep{DBLP:journals/eor/KressMP15}.





{\em Fair Division.}
Allocating a set of indivisible items among multiple agents is a fundamental problem in the fields of multi-agent systems and computational social choice, and we refer the readers to recent surveys  \citep{DBLP:journals/corr/abs-2202-07551,DBLP:journals/corr/abs-2202-08713} for more detailed discussion.
Envy-freeness (EF) and maximin share fairness (MMS) are two well accepted and extensively studied solution concepts.
However, with indivisible items, these requirements are demanding and thus the state-of-the-art research mostly studies their relaxations and approximations. 
For example, EF1 allocation is studied as a relaxation of EF which always exists  \citep{DBLP:conf/sigecom/LiptonMMS04}. 
Various constant approximation algorithms for MMS allocations are proposed in \citep{kurokawa2018fair,garg2019improved} for additive valuations and in \citep{DBLP:journals/teco/BarmanK20,DBLP:conf/sigecom/GhodsiHSSY18} for subadditive valuations.
Our work focuses on indivisible graphical items where agents have combinatorial valuations (neither subadditive nor superadditive) depending on the structural properties.
Moreover, all the existing algorithms for non-additive valuations run in polynomial time only if the computation of valuations is assumed to be effortless (i.e., oracles).
In contrast, in this work, we aim at designing truly polynomial-time approximation algorithms without valuation oracles. 

\section{Preliminaries}\label{preliminaries}
Denote by $G=(V, E)$ an undirected graph with no reflexive edges, where $V$ contains all vertices and $E$ contains all edges. 
The vertices are the resources, also called items, that are to be allocated to $n$ heterogeneous agents, denoted by $N$.
Each agent $i$ has an edge weight function $w_i:E \to \bR^+\cup \{0\}$, which may be different from others'.
If $w_i(e) \in \{0,1\}$ for all $e\in E$, then the weight function is called binary. 
Let $\w=(w_1,\cdots,w_n)$.
A matching $M\subseteq E$ is a set of vertex-disjoint edges, and let $w_i(M) = \sum_{e\in M}w_i(e)$. Let $M(V)$ be the maximum (weighted) matching within the induced subgraph $G[V]$. 
For any subgraph $G'$, let $V(G')$ and $E(G')$ be the sets of vertices and edges in $G'$, respectively. 
An allocation $\bX=(X_1,\cdots,X_n)$ is a partition of $V$ such that $\cup_{i\in N}X_i = V$ and $X_i\cap X_j = \emptyset$ for $i\neq j$.
If $\bigcup_{i\in N}X_i \subsetneq V$, the allocation is called partial.
Each agent $i$ has a utility function $u_i: 2^V \to \bR^+\cup \{0\}$, where $u_i(X_i)$ equals the weight of a maximum (weighted) matching in $G[X_i]$. 
When the agents have identical valuations (i.e., homogeneous agents), 
we omit the subscript and use $w(\cdot)$ and $u(\cdot)$ to denote all agents' weight and utility functions. 
A problem instance is denoted by $\cI = (G, N)$. When we want to highlight the weight function, $w$ is also included as a parameter, i.e., $\cI = (G, N, w)$.


Next, we introduce the solution concepts. 
Our first fairness notion is {\em maximin share} (MMS) \citep{budish2011combinatorial}.
Letting $\Pi_n(V)$ be the set of all $n$-partitions of $V$, the maximin share of agent $i$ is
\[
\MMS_i(\cI) = \max_{\bX \in \Pi_n(V)} \min_{j \in N} u_i(X_j).
\]
We may write $\MMS_i$ for short if $\cI$ is clear from the context.
Therefore agent $i$ is satisfied regarding MMS fairness if her utility is no smaller than $\MMS_i$.


\begin{definition}[$\alpha$-MMS]
For any $\alpha\ge 0$, an allocation $\bX=(X_1,\cdots,X_n)$ is called {\em $\alpha$-approximate maximin share} ($\alpha$-MMS) fair if for all agents $i \in N$, 
\[
u_i(X_i) \ge \alpha \cdot \MMS_i.
\]
The allocation is called MMS fair if $\alpha = 1$.
 \end{definition}

The second fairness notion is about {\em envy-freeness} (EF).
An allocation $\bX$ is called EF if no agent envies any other agent's bundle, i.e.,
\[
u_i(X_i) \ge u_i(X_j) \text{ for all agents $i,j \in N$.}
\]
EF is very hard to satisfy; consider a simple example, where the graph is a triangle and two agents have weight 1 for all edges.
Then in every allocation, there is one agent who gets at most one vertex (with utility 0) and the other agent gets at least two vertices (which contains an edge and thus has utility 1). Accordingly, we focus on {\em envy-free up to one item} instead \citep{budish2011combinatorial}.
\begin{definition}[EF1]
An allocation $\bX=(X_1,\cdots,X_n)$ is called {\em envy-free up to 1 item} (EF1) if for any $i$ and $j$, there exists $g\in X_j$ such that $u_i(X_i) \ge u_i(X_j \setminus \{g\})$.
\end{definition}

Besides fairness, we also want the allocation to be efficient. 
Given an allocation 
$\bX=(X_1,\cdots,X_n)$, the {\em social welfare} of $\bX$ is $\sw(\bX) = \sum_{i\in N} u_i(X_i)$.
Note that given any instance $\cI$, the best possible social welfare of any allocation is the weight of a maximum matching in the graph $G$ by setting the weight of each edge to $\max_{i\in N} w_i(e)$, which is denoted by $\sw^*(\cI)$.
If the instance $\cI$ is clear from the context, we also denote $\sw^*(\cI)$ as $\sw^*$ for short. 

\section{MMS Fair Allocations}\label{Fair:identical}
\subsection{Heterogeneous Agents}
\begin{theorem}
\label{thm:heter:mms:negative}
No algorithm has bounded approximation guarantee for MMS, even if there are two agents with non-identical binary weight functions on the graph.
\end{theorem}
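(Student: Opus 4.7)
The plan is to exhibit a single $4$-vertex instance in which both agents have strictly positive MMS values but every allocation assigns utility $0$ to at least one of them, so no positive approximation can hold. Take $V=\{a,b,c,d\}$ with edge set $E=\{ab,\,bd,\,dc,\,ca\}$ (a $4$-cycle $a\text{-}b\text{-}d\text{-}c\text{-}a$). Let agent $1$ have binary weight $1$ on the two opposite edges $ab$ and $cd$ and weight $0$ on $ac$ and $bd$; symmetrically, let agent $2$ have weight $1$ on $ac$ and $bd$ and weight $0$ on $ab$ and $cd$.

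First I would compute the MMS values. Agent $1$ witnesses $\MMS_1 \ge 1$ via the partition $(\{a,b\},\{c,d\})$, which gives both parts utility $1$ under $w_1$; symmetrically, agent $2$ witnesses $\MMS_2 \ge 1$ via $(\{a,c\},\{b,d\})$. (Both values are in fact exactly $1$, but we only need the lower bound.)

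Next I would argue by a short case analysis over all $2$-partitions $(X_1,X_2)$ of $V$. Whenever $|X_i|\le 1$, the induced subgraph has no edges, so $u_i(X_i)=0$; thus the unbalanced splits $(0,4),(1,3),(3,1),(4,0)$ all force some agent to $0$. The only remaining case is $|X_1|=|X_2|=2$, of which there are exactly three:
(i) $(\{a,b\},\{c,d\})$ contains only agent $1$'s valued edges, so whichever side is given to agent $2$ has value $0$ under $w_2$;
(ii) $(\{a,c\},\{b,d\})$ symmetrically forces agent $1$ to $0$;
(iii) $(\{a,d\},\{b,c\})$ induces no edges of either agent, so both agents get $0$.
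Hence in every allocation some agent has utility $0$, while $\MMS_1=\MMS_2=1$. Therefore for any $\alpha>0$, no allocation can guarantee $u_i(X_i)\ge \alpha\cdot\MMS_i$ for both agents simultaneously, which proves the claim.

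The only real obstacle is choosing the construction: one needs two agents whose MMS-witnessing partitions are ``orthogonal'', in the sense that the two partitions disagree on every pair of vertices yet both give each side a full matching. The opposite-edge coloring of the $4$-cycle realizes this in the smallest possible size and with the restricted weight class (binary) required by the statement.
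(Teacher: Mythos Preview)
Your proposal is correct and follows essentially the same approach as the paper: a four-vertex instance with two agents whose binary weights pick out the two perfect matchings of a $4$-cycle, so that each agent has $\MMS_i=1$ while every partition leaves some agent with value $0$. Your explicit case analysis over the three balanced $2$--$2$ splits (and the trivial unbalanced ones) makes the verification a bit more self-contained than the paper's ``it can be verified'' sentence, but the construction and the idea are the same.
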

\begin{proof}
Consider the example as shown in Figure~\ref{fig1}. The graph containing four nodes $\{v_1, v_2, v_3, v_4\}$ is allocated to two agents whose valuations (i.e., edge weights) are shown in Figure \ref{figure202101053} and \ref{figure202101052} respectively. 
It can be verified that $\MMS_i=1$ for both $i=1,2$. 
However,  no matter how we allocate the vertices to the agents, one of them receives utility of $0$.
\begin{figure}[htb]
\begin{center}
\subfigure[Agent $1$'s Metric]{
\includegraphics[width=1.2in]{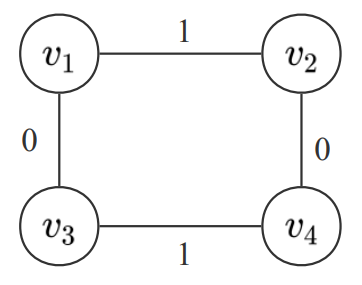}
\label{figure202101053}
}
\subfigure[Agent $2$'s Metric]{
\includegraphics[width=1.2in]{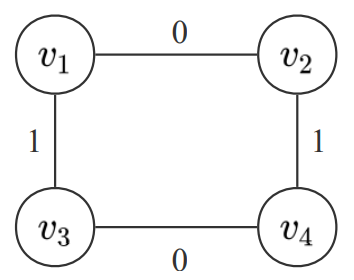}
\label{figure202101052}
}
\caption{A bad example for which no allocation has bounded approximation of MMS fairness.}
\label{fig1}
\end{center}
\end{figure}
\end{proof}
Theorem \ref{thm:heter:mms:negative} is very strong in the sense that it excludes the possibility of designing algorithms with bounded approximation ratio for MMS even for the special cases of two-agent or binary weight functions.

\subsection{Homogeneous Agents}

Due to the strong impossibility, we study the case of identical valuations, where MMS fairness degenerates to the max-min objective, where the problem is to partition a graph into $n$ subgraphs so that the smallest weight of the maximum matchings in these subgraphs is maximized. 
It is easy to see that finding such an allocation is NP-hard even when there are two agents and the graph only consists of independent edges, which is essentially a Partition problem. 
Thus, our target is polynomial-time approximation algorithms.
Without loss of generality, in this section, we assume $w(e) \ge 1$ for all $e\in E$.
Since the agents are identical, the subscript in $\MMS_i$ is omitted. 
Our main result in this section is as follows. 
\begin{theorem}
\label{thm:maxmin:n}
For homogeneous agents, we can compute a $1/8$-MMS allocation in polynomial time.
\end{theorem}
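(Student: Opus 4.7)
The plan is constructive: I will describe a polynomial-time bag-filling algorithm on a maximum matching $M^*$ of $G$ and then argue by a case analysis that every agent's bundle has utility at least $\MMS/8$. As the starting point, I first bound $\MMS$ from above using the max matching: in any $\MMS$-optimal partition $(X_1,\ldots,X_n)$ the internal max matchings $M_1,\ldots,M_n$ are pairwise vertex-disjoint, so their union is a matching of $G$ with weight $\ge n\cdot\MMS$; hence $w(M^*)\ge n\cdot\MMS$ and $\MMS\le w(M^*)/n$ (the special case $|M^*|<n$ forces $\MMS=0$ and is trivial). Compute $M^*$, sort its edges by decreasing weight as $e_1,e_2,\ldots,e_m$, and set the target $\tau := w(M^*)/(8n)\ge\MMS/8$. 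Sweep through agents $1,2,\ldots,n$: for each agent $i$ in turn, greedily transfer the next-heaviest remaining edge of $M^*$ (along with its two endpoints) into bundle $X_i$ until the weight of the edges inside $X_i$ reaches $\tau$; dump any yet-unassigned vertices into an arbitrary bundle at the end.

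For the case analysis, call an edge of $M^*$ \emph{heavy} if its weight is at least $\tau$ and \emph{light} otherwise. If $M^*$ has at least $n$ heavy edges, then each of the first $n$ agents picks up one heavy edge during the sweep and finishes with utility $\ge\tau\ge\MMS/8$, and we are done. Otherwise $M^*$ has only $h<n$ heavy edges; these are distributed one per agent to the first $h$ agents (each satisfied), while the remaining $n-h$ agents bag-fill from light edges only. Because each light edge has weight $<\tau$, every light-only bundle ends with weight in $[\tau,2\tau)$, so the light weight consumed totals at most $2\tau(n-h)\le w(M^*)/4$, which ought to be available in the usual regime.

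The main obstacle is the subcase in which the heavy edges of $M^*$ collectively weigh more than $3w(M^*)/4$, so that less than $w(M^*)/4$ of light weight is available and the sweep may leave some agent with bundle weight strictly below $\tau$. To handle this I will patch the algorithm with a local exchange step: whenever an agent $i$ fails to reach $\tau$, attempt to swap one vertex of $X_i$ with a vertex of a heavy-edge bundle so as to uncover an edge of $G\setminus M^*$ inside $X_i$ and boost the matching value of $X_i$. Using the maximality of $M^*$ together with the concentration of weight in few edges, I expect to show that a beneficial swap always exists when bag-filling falls short and that a polynomial number of swaps suffices to drive every bundle up to $\tau$. Establishing a clean termination argument and verifying that every failure of bag-filling can in fact be remedied via such a swap is, as I see it, the principal technical hurdle, and it is where the exact constant $1/8$ (rather than a larger or smaller one) should fall out of the analysis.
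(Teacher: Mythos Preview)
Your bag-filling target $\tau = w(M^*)/(8n)$ is the wrong benchmark, and this is not a detail that the swap-repair step can fix. You correctly observe that $\MMS\le w(M^*)/n$, hence $\tau\ge\MMS/8$; but the inequality can be arbitrarily loose, so insisting that \emph{every} agent reach $\tau$ may be flatly infeasible. Take the paper's own cautionary example (Figure~\ref{figure20210105}): with two agents, the maximum matching $M^*$ is a single edge of weight $\Delta$, while $\MMS=1$. Your threshold is $\tau=\Delta/16$. After bag-filling, agent~$1$ holds the $\Delta$-edge and agent~$2$ holds nothing from $M^*$. No ``local exchange'' can raise agent~$2$'s matching value to $\Delta/16$, because the entire graph outside $M^*$ carries only constant weight. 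So the patch you propose---swapping a vertex to uncover an edge of $G\setminus M^*$---cannot succeed in general; the obstacle you flagged as ``the principal technical hurdle'' is in fact a genuine counterexample, not merely a gap in the argument.

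The paper's proof takes a fundamentally different route that directly confronts this phenomenon. It first rounds all weights down to powers of~$2$ (losing a factor of~$2$), then repeatedly does the following: compute $M^*$ and its greedy $n$-partition $(M_1,\ldots,M_n)$; if $|M_1|\ge 2$ or $w(M_1)\le 2\,w(M_n)$, stop (Lemma~\ref{lemma20220105} gives a $1/2$-MMS guarantee in this regime); otherwise $M_1=\{e_1\}$ is a single over-heavy edge, and the algorithm \emph{halves the weight of every edge at least as heavy as $e_1$} and recomputes. The key claim (Claim~\ref{claim:maxmin:while}) is that each halving either preserves $\MMS$ exactly (when $w(e_1)\ge 2\,\MMS$) or happens at most once and immediately forces $|M_1'|\ge 2$ while losing at most another factor of~$2$ in $\MMS$. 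Combining the three factors of~$2$ yields $1/8$. The essential idea you are missing is that when the maximum matching is dominated by a few huge edges, one must deliberately discount those edges and look at a \emph{different} maximum matching of the reweighted graph; no amount of local vertex-swapping on the original $M^*$ recovers the lost structure.
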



Given an instance $\cI=(G, N)$, to design such an algorithm with guaranteed approximation of MMS fairness, the thought at a glance is to allocate a maximum matching in $G$.
That is, we compute a maximum matching $M^* \subseteq E$, and then partition $M^*$ into $n$ bundles $(M_1,\cdots,M_n)$ where $w(M_1)\ge \cdots \ge w(M_n)$ such that $w(M_n)$ is as large as possible. 
However, such an allocation can be arbitrarily bad, let alone maximizing the minimum bundle being an NP-hard problem. 
Consider an example with two agents and the graph is shown in Figure~\ref{figure20210105} where $\Delta > 1$ is arbitrarily large. 
Any allocation with bounded approximation ratio of MMS fairness ensures that every agent has value 1, but by partitioning the maximum matching (which contains a single edge with weight $\Delta$) the smaller bundle has value 0.

\begin{figure}[h]
\centering
\includegraphics[width=3in]{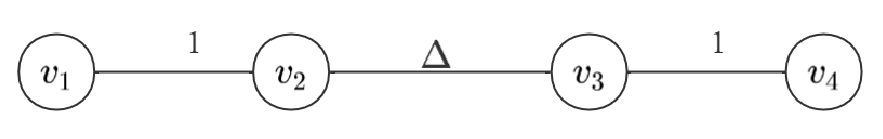} 
\caption{
A bad example when partitioning the maximum matching does not have bounded approximation of MMS.
}\label{figure20210105}
\end{figure}

Before describing our algorithm, we first define greedy partition of the maximum matching.
\paragraph{Greedy Partition.}
Given a matching $M$, partition $M$ into $\Gamma(M)=(M_1, \cdots, M_n)$ as follows.
\begin{itemize}
    \item Sort and rename the edges in $M$ such that $w(e_1)\ge\cdots\ge w(e_k)$ where $k=|M|$. 
    \item Initially set $M_1 = \cdots=M_n = \emptyset$. 
    \item For $i=1,\cdots, k$, select $j$ such that $w(M_j) \le w(M_{j'})$ for all $j'$ and set $M_j = M_j \cup \{e_i\}$.
    \item Sort and rename $M_1,\cdots,M_n$ so that $w(M_1) \ge \cdots \ge w(M_n)$.
    
\end{itemize}
The greedy partition $(M_1,\cdots,M_n)$ of the maximum matching $M^*$ corresponds to an allocation of vertices where unmatched vertices $V' = V\setminus \cup_{i\in N} V(M_i)$ can be allocated arbitrarily.
Although this allocation might not be good in general,
when the graph is unweighted ($w(e)=w(e')$ for all $e,e'\in E$) or $|M_1| \ge 2$, it ensures a good approximation.
\begin{lemma}
\label{lem:maxmin:unit}
If $G$ is unweighted, the greedy partition $(M_1,\cdots,M_n)$ of $M^*$
is an MMS allocation.
\end{lemma}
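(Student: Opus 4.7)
The plan is to show a matching pair of inequalities: the smallest bundle in the greedy partition achieves utility $\lfloor |M^*|/n \rfloor$, while no partition can do strictly better than this, so $\MMS \le \lfloor |M^*|/n \rfloor$ and equality is attained.

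First I would translate utilities into the unweighted setting: since $w(e)=1$ for every edge, $u(X_j)$ is exactly the cardinality of a maximum matching in $G[X_j]$, and $w(M_j)=|M_j|$. Because the greedy routine repeatedly adds the next edge of $M^*$ to the currently smallest bundle and all edges have identical weight, the resulting sizes $|M_1|,\dots,|M_n|$ differ by at most one, so the smallest bundle has size $\lfloor |M^*|/n \rfloor$. Crucially, even after the unmatched vertices of $V' = V\setminus\bigcup_j V(M_j)$ are dumped into the bundles arbitrarily, $M_j$ itself remains a matching inside $G[X_j]$, which gives
\[
u(X_j) \;\ge\; |M_j| \;\ge\; \lfloor |M^*|/n \rfloor.
\]

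Next I would establish the upper bound on $\MMS$. For any partition $\bY=(Y_1,\dots,Y_n)\in\Pi_n(V)$, pick a maximum matching $N_j$ in each $G[Y_j]$; since the $Y_j$'s are vertex-disjoint, $\bigcup_j N_j$ is a matching of the whole graph $G$, hence
\[
\sum_{j\in N} u(Y_j) \;=\; \sum_{j\in N} |N_j| \;\le\; |M^*|.
\]
Averaging gives $\min_j u(Y_j)\le |M^*|/n$, and because utilities are integers in the unweighted case this sharpens to $\min_j u(Y_j)\le \lfloor |M^*|/n\rfloor$. Taking the maximum over $\bY$ yields $\MMS \le \lfloor |M^*|/n \rfloor$. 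Combining with the previous display finishes the proof, since every agent's utility in the greedy partition is at least $\lfloor |M^*|/n \rfloor \ge \MMS$.

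I do not foresee a genuine obstacle here; the only subtlety is the bookkeeping around the unmatched vertices $V'$, which must be handled carefully enough to insist that appending them to any bundle cannot decrease the maximum-matching utility (it can only help, and in any case $M_j$ survives as a feasible matching of $G[X_j]$). Everything else is a one-line averaging argument together with the fact that a disjoint union of matchings from the $G[Y_j]$'s is itself a matching in $G$.
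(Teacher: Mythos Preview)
Your proposal is correct and follows essentially the same approach as the paper: both arguments hinge on the fact that the union of maximum matchings from the parts of any vertex partition is itself a matching in $G$, so the total (and hence the minimum) is bounded by $|M^*|$. The paper phrases this as a contradiction (assuming $\MMS > \lfloor |M^*|/n\rfloor$ forces $\sum_i |M(O_i)| > |M^*|$), whereas you use a direct averaging-plus-integrality argument, but the content is identical; if anything, your version is slightly more explicit about why appending the unmatched vertices $V'$ cannot decrease any bundle's utility.
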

\begin{proof}
Without loss of generality, assume all edges have weight 1. In the greedy partition $(M_1,\cdots,M_n)$ of $M^*$, for any $i\in N$,
\[
|M_i| \ge |M_n| = \lfloor \frac{|M^*|}{n} \rfloor.
\]
Let $(O_1,\cdots,O_n)$ be an optimal max-min allocation.
If $\MMS = |M(O_n)| > |M_n|$, then for all $i\in N$,
\[
|M(O_i)| \ge \lfloor \frac{|M^*|}{n} \rfloor + 1.
\]
Thus
\begin{align*}
    \sum_{i\in N} |M(O_i)| \ge n \cdot \lfloor \frac{|M^*|}{n} \rfloor + n > |M^*|,
\end{align*}
which is a contradiction with $M^*$ being a maximum matching.
\end{proof}

\begin{lemma}\label{lemma20220105}
If $|M_1| \ge 2$, $\Gamma(M^*)$ corresponds to an allocation that is $1/2$-MMS fair.
\end{lemma}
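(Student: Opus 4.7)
The plan is to reduce the claim to two quantitative facts about the greedy partition $(M_1,\dots,M_n)$ of the maximum matching $M^*$ and then chain them together. Since the edges of $M_i$ all lie in $G[X_i]$ (both endpoints belong to $X_i$), we have $u(X_i)\ge w(M_i)\ge w(M_n)$ for every agent $i$, so it suffices to prove $w(M_n)\ge \MMS/2$. I will establish this via two inequalities: (a) $w(M_1)\ge \MMS$ and (b) $w(M_n)\ge w(M_1)/2$.

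For step (a), a simple averaging argument should work. For any partition $(X_1,\dots,X_n)$ of $V$, the maximum matchings $M(X_1),\dots,M(X_n)$ lie in vertex-disjoint subgraphs and so their union is a matching of $G$; hence $\sum_i u(X_i)\le w(M^*)$, which gives $\MMS\le w(M^*)/n$. On the greedy side, $\sum_i w(M_i)=w(M^*)$ and $w(M_1)$ is the maximum, so $w(M_1)\ge w(M^*)/n\ge \MMS$.

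Step (b) is the LPT-flavored heart of the argument, and it is precisely here that the hypothesis $|M_1|\ge 2$ enters. Let $e$ be the last edge that the greedy procedure placed into the bin which eventually is named $M_1$. Because $|M_1|\ge 2$, that bin already contained some earlier edge $e'$ at the moment $e$ was added, and since edges are processed in non-increasing weight order, $w(e')\ge w(e)$; therefore $w(M_1)\ge w(e)+w(e')\ge 2w(e)$, so $w(e)\le w(M_1)/2$. At the moment $e$ was placed, this bin was by the greedy rule the lightest of all, so every other bin had weight at least $w(M_1)-w(e)$ at that time, and since bin weights only grow this lower bound persists until the end. In particular $w(M_n)\ge w(M_1)-w(e)\ge w(M_1)/2$, which combined with (a) gives $w(M_n)\ge \MMS/2$ and hence $u(X_i)\ge \MMS/2$ for every $i$.

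I expect the main obstacle to be step (b), specifically verifying that $|M_1|\ge 2$ is exactly the right hypothesis to force $w(e)\le w(M_1)/2$. Without a second, already-present edge in $M_1$ serving as a witness that $w(e')\ge w(e)$, the LPT bound degenerates: the lone edge placed in $M_1$ could dominate its weight entirely, making $w(M_1)-w(e)$ potentially zero and collapsing the whole chain. Once this key inequality is in hand, the rest of the argument is essentially bookkeeping.
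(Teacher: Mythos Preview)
Your proof is correct and follows essentially the same approach as the paper: both arguments hinge on examining the last edge $e$ placed into the bin that becomes $M_1$, use the greedy rule to lower-bound every other bin by $w(M_1)-w(e)$, and then combine this with the averaging bound $w(M_1)\ge w(M^*)/n\ge \MMS$. The only cosmetic difference is that the paper derives $w(M_n)\ge w(e)$ and $w(M_n)\ge w(M_1\setminus\{e\})$ and averages them, whereas you first bound $w(e)\le w(M_1)/2$ via the earlier edge $e'$; the two computations are equivalent.
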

\begin{proof}
Denote by $O=(O_1, O_2, \cdots, O_n)$ the optimal solution, where $u(O_1)\ge u(O_2) \ge \cdots \ge u(O_n)$ and $\MMS(\cI) = u(O_n)$. Under the maximum matching $M$, consider the greedy partition $(M_1, M_2, \cdots, M_n)$, where $u(M_1) \ge u(M_2) \ge \cdots \ge u(M_n)$. In greedy partition procedure, all edges are sorted in descending order of their weights and each time we select the edge with the largest weight in the remaining edge set and allocate it to the bundle with the least total utility.
If $|M_1| \geq 2$, consider the last edge $e$ added to $M_1$, we have $w(M_n) \ge w(e)$, since there exists at least one edge added to $M_n$ before edge $e$ is added to $M_1$. Since in the greedy procedure, edges are added to the bundle with least utility, we have $w(M_n) \ge w(M_1 \setminus \{e\})$. Furthermore, we have
\begin{equation}\nonumber
\begin{aligned}
w(M_n) &\geq \frac{1}{2}(w(e)+w(M_1/{e})) \geq \frac{1}{2}w(M_1)
\\& \ge \frac{1}{2n}\sum\limits_{i=1}^{n}w(M_i)  \ge \frac{1}{2n}\sum\limits_{i=1}^{n}u(O_i)\\& \ge
 \frac{1}{2}u(O_n),
     \end{aligned}
     \end{equation}
and the lemma holds accordingly.
\end{proof}
The tricky case is when $M_1$ contains a single edge $e^*$. 
The greedy partition fails because $w(e^*)$ is too large so that very few edges or edges with very small weights can be put in $M_i$'s for $i\ge 2$.
One way to overcome this difficulty is to decrease the weight of $e^*$ and re-compute a maximum matching, through which the advantage of $e^*$ can diminish.
For simplicity, assume all edge weights are powers of 2.
This is without much loss of generality which decreases the approximation ratio by at most $1/2$.
\begin{lemma}\label{lem:maxmin:rounding}
Let $\cI'=(G,N,w')$ be the instance obtained from $\cI=(G,N,w)$ by rounding all edge weights down to nearest powers of 2.
If $(X_1,\cdots,X_n)$ is an $\alpha$-MMS allocation of $\cI'$, it is also an $\alpha/2$-MMS allocation of $\cI$.
\end{lemma}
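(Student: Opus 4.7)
The plan is to propagate the factor-of-two distortion from the edge level up through matchings, utilities, and finally the $\MMS$ value itself. Let $w'$ denote the rounded weights. Since all weights are at least $1$, for every edge $e$ we have $w'(e) \le w(e) \le 2 w'(e)$ (rounding $w(e)\in[2^k,2^{k+1})$ down to $2^k$).

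First I would lift this pointwise comparison to matchings: for any matching $M$, summing over its edges gives $w'(M) \le w(M) \le 2 w'(M)$. Next I would lift it to the induced utility $u(S) = w(M(S))$ on any vertex subset $S$. Let $M^*$ and $M'^*$ be maximum matchings in $G[S]$ under $w$ and $w'$, respectively. Then
\begin{align*}
u'(S) &= w'(M'^*) \ge w'(M^*) \ge \tfrac{1}{2} w(M^*) = \tfrac{1}{2} u(S), \\
u'(S) &= w'(M'^*) \le w(M'^*) \le w(M^*) = u(S),
\end{align*}
so $\tfrac{1}{2} u(S) \le u'(S) \le u(S)$ for every bundle $S$.

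From this I would compare the two $\MMS$ benchmarks. Fix an optimal max-min partition $(O_1,\ldots,O_n)$ under $w$, so $\MMS(\cI) = \min_j u(O_j)$. Applying the pointwise bound $u'(O_j) \ge \tfrac{1}{2} u(O_j)$ and then using $(O_1,\ldots,O_n)$ as a candidate partition for $\MMS(\cI')$ yields $\MMS(\cI') \ge \min_j u'(O_j) \ge \tfrac{1}{2}\MMS(\cI)$. Finally, for any $\alpha$-MMS allocation $(X_1,\ldots,X_n)$ of $\cI'$, I chain the inequalities for each agent $i$:
\[
u(X_i) \;\ge\; u'(X_i) \;\ge\; \alpha \cdot \MMS(\cI') \;\ge\; \tfrac{\alpha}{2}\cdot \MMS(\cI),
\]
which is exactly the $\alpha/2$-MMS guarantee for $\cI$.

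There is no real obstacle here; the only subtle point worth double-checking is that the rounding does not produce weight $0$ (which would break the utility comparison by a multiplicative factor). This is ensured by the paper's standing assumption $w(e)\ge 1$, so that $w'(e)\ge 1$ as well and the ratio $w(e)/w'(e)$ is always well-defined and bounded by $2$.
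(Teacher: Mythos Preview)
Your proof is correct and follows essentially the same route as the paper. The paper phrases the comparison via an auxiliary instance $\cI''$ obtained by halving all weights (so $\MMS(\cI'')=\tfrac12\MMS(\cI)$ and $w'\ge w''$ gives $\MMS(\cI')\ge\MMS(\cI'')$), but this is just a repackaging of your pointwise bound $w'(e)\le w(e)\le 2w'(e)$; if anything, your write-up is more explicit about the step $u(X_i)\ge u'(X_i)$, which the paper leaves implicit.
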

\begin{proof}
Let $\cI''=(G,N,w'')$ be the instance obtained from $\cI$ by halving all its edge weights. It is easy to see that 
\[
\MMS(\cI'')=\frac{1}{2}\cdot \MMS(\cI).
\]
Moreover, the weight of all edges in instance $\cI'$ is at least as large as that in instance $\cI''$, and thus 
\[
\MMS(\cI') \geq \MMS(\cI'')=\frac{1}{2}\cdot\MMS(\cI).
\]
Finally, since $u_i(X_i) \geq \alpha \cdot \MMS(\cI')$ for all $i \in N$, then 
\[
u_i(X_i) \geq \frac{\alpha}{2}\cdot \MMS(\cI), 
\]
and thus the lemma holds.
\end{proof}
Now we are ready to describe our Algorithm \ref{alg20210091}. 
We first compute a maximum matching $M^*$ and its greedy partition $\Gamma(M^*)=(M_1,\cdots, M_n)$ such that $w(M_1)\ge \cdots \ge w(M_n)$. 
If $|M_1|\ge 2$, by Lemmas \ref{lemma20220105} and \ref{lem:maxmin:rounding}, we can directly output the corresponding partition of vertices so that the approximation ratio is at least $1/4$.
If $|M_1| = 1$, we consider two cases. 
When $w(M_n) \ge 1/2\cdot w(M_1)$, $w(M_n)$ is still not too small and we can stop the algorithm with a constant approximation ratio. 
However, if $w(M_n) < 1/2\cdot w(M_1)$, it means the utility of the smallest bundle is much less than that of the largest bundle. 
Then we update the edge weights:
Let $H$ be the edges with weights no smaller than $w(e_1)$ where $e_1$ is the edge in $M_1$, and decrease all their weights to $1/2\cdot w(e_1)$.
By repeating this procedure, eventually we reach an allocation for which $w(M_n) \ge 1/2 \cdot w(M_1)$ or $|M_1| \ge 2$.
\begin{algorithm}
  \caption{Approximately MMS Fair Algorithm
    \label{alg20210091}}
  \begin{algorithmic}[1]
  \REQUIRE Instance $\cI=(G,N)$ with $G=(V, E; w)$.
  \ENSURE Allocation $\bX=(X_1, \cdots, X_n)$.
    \STATE 
    For all $e \in E$, reset
    \[
      w(e) = 2^{\lfloor \log w(e) \rfloor }.
    \]
      
   \STATE \label{step:maxmin:n}Find a maximum matching $M^*$ in $G$.
   Denote by $V'$ the set of unmatched vertices.
   \STATE Find the greedy partition $\Gamma(M^*)=(M_1,\cdots,M_n)$ of $M^*$ such that $w(M_1) \ge \cdots \ge w(M_n)$.
  \WHILE{$w(M_1) > 2\cdot w(M_n)$ and $G$ has different weights}
        \STATE Let $e_1$ be the edge in $M_1$ and $H = \{e\in E \mid w(e) \ge w(e_1)\}$.
        \STATE \label{step:maxmin:n:1} Let $w(e) = w(e_1)/2$ for all $e\in H$.
        \STATE Re-compute a maximum matching $M^*$.
        \STATE Re-set $V'$ to be unmatched vertices by $M^*$. 
        \STATE Re-compute the greedy partition $\Gamma(M^*)=(M_1,\cdots,M_n)$ such that $w(M_1) \ge \cdots \ge w(M_n)$.
  \ENDWHILE
    \STATE Set $X_i = V(M_i)$ for $i =1, \cdots, n-1$.
    \STATE Set $X_n = V(M_n) \cup V'$.
    \STATE Return allocation $(X_1, \cdots, X_n)$.
   
  \end{algorithmic}
\end{algorithm}
\begin{proof}[Proof of Theorem \ref{thm:maxmin:n}]
First, we show Algorithm \ref{alg20210091} is well-defined and runs in polynomial time. 
Every time when the condition of the {\bf while} loop holds, either the graph has different weights and an allocation is returned or the weights of the heaviest edges are decreased by $1/2^k$ with some $k\ge 1$. 
Thus the {\bf while} loop is executed $O(\max_{e\in E} \log w(e))$ rounds. 

Next we prove the approximation ratio. 
By Lemma \ref{lem:maxmin:rounding}, we only need to consider the instance where the edge weights are powers of 2 and show the allocation is 1/4-approximate MMS fair.
Denote by $O=(O_1,\cdots,O_n)$ the optimal solution, where $u(O_1)\ge \cdots\ge u(O_n)$ and $\MMS(\cI) = u(O_n)$.
The first time when we reach the {\bf while} loop, if $w(M_1) \le 2\cdot w(M_n)$, 
\[
w(M_n) \ge \frac{1}{2}\cdot w(M_1) \ge \frac{1}{2}\cdot u(O_n)= \frac{1}{2}\cdot \MMS(\cI),
\]
where the second inequality holds because $M^*$ is a maximum matching in $G$.
Thus the allocation is 1/2-MMS. 
If all edges have the same weight, then by Lemma \ref{lem:maxmin:unit}, the allocation is optimal. 

We move into the {\bf while} loop if $w(M_1) > 2\cdot w(M_n)$ and the edge weights are not identical.
Note that $w(M_1) > 2\cdot w(M_n)$ implies $M_1$ contains a single edge denoted by $e_1$.
Otherwise consider the last edge added to $M_1$ in the greedy partition, denoted by $e'$.
Then $w(M_1\setminus\{e'\}) \le w(M_n)$ and $w(e') \le w(M_n)$, which implies $w(M_1) \le 2\cdot w(M_n)$.
After the {\bf while} loop, denote by $\cI'$ the instance, by $w'(\cdot)$ the new weights with new utility function $u'(\cdot)$, by $O'=(O'_1,\cdots,O'_n)$ the new optimal solution and by $M'$ the maximum matching with greedy partition $(M'_1,\cdots,M'_n)$. 
Then we have the following claim. 
\begin{claim}\label{claim:maxmin:while}
After each {\bf while} loop, one of the following two cases holds.
\begin{itemize}
    \item Case 1. $w(e_1) \ge 2\cdot\MMS(\cI)$, then $\MMS(\cI')=\MMS(\cI)$;
    \item Case 2. $w(e_1) < 2\cdot\MMS(\cI)$, then $2\cdot \MMS(\cI')>\MMS(\cI)$ and $w'(M'_1) \le 2\cdot w'(M'_n)$.
\end{itemize}
\end{claim}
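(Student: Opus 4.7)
The plan is to handle both cases by comparing $\MMS(\cI)$ and $\MMS(\cI')$ through the optimal MMS partition $O=(O_1,\ldots,O_n)$ of $\cI$, used as a witness partition in $\cI'$. The trivial direction $\MMS(\cI')\le\MMS(\cI)$ follows immediately from $w'(e)\le w(e)$ for every edge. For the other direction I will analyze each bundle $O_i$ under $w'$: either $G[O_i]$ contains some edge of $H$, in which case $u'(O_i)\ge w(e_1)/2$, or it does not, in which case the $w'$-weights in $G[O_i]$ coincide with the $w$-weights and $u'(O_i)=u(O_i)\ge\MMS(\cI)$.

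For Case~1, the hypothesis $w(e_1)\ge 2\MMS(\cI)$ makes both branches yield $u'(O_i)\ge\MMS(\cI)$, so $\MMS(\cI')\ge\min_i u'(O_i)\ge\MMS(\cI)$, and combined with the trivial upper bound this gives equality.

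For Case~2, the first step is to sharpen $w(e_1)\ge\MMS(\cI)$ to a strict inequality. Because $M^*$ is a maximum $w$-matching, $w(M^*)=\sum_i w(M_i)\ge\sum_i u(O_i)\ge n\cdot\MMS(\cI)$; since $w(M_1)\ge w(M_i)$ for every $i$, this gives $w(e_1)=w(M_1)\ge\MMS(\cI)$. Equality would force $w(M_i)=w(e_1)$ for all $i$, contradicting the while-loop precondition $w(M_1)>2w(M_n)$, so $w(e_1)>\MMS(\cI)$. Now the same bundle-by-bundle analysis (using $w(e_1)<2\MMS(\cI)$ to make the $H$-branch the binding one) yields $u'(O_i)\ge w(e_1)/2$ for every $i$, so $\MMS(\cI')\ge w(e_1)/2>\MMS(\cI)/2$, which is the first conclusion of Case~2.

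The second conclusion $w'(M'_1)\le 2w'(M'_n)$ I would obtain by splitting on $|M'_1|$. If $|M'_1|\ge 2$, the greedy-partition argument from Lemma~\ref{lemma20220105} applies verbatim to $(M'_1,\ldots,M'_n)$ and gives $w'(M'_n)\ge w'(M'_1)/2$. If $|M'_1|=1$, I would exploit that after the update every edge has weight at most $w(e_1)/2$: edges in $H$ become exactly $w(e_1)/2$, while edges outside $H$ had weight strictly below $w(e_1)$ and therefore, being powers of two, at most $w(e_1)/2$. Hence $w'(M'_1)\le w(e_1)/2$. On the other hand, $w'(M')=\sum_i w'(M'_i)\ge n\cdot\MMS(\cI')\ge n\cdot w(e_1)/2$ (the first inequality because the union of maximum matchings of the $O_i$'s is a matching in $G$, hence its total weight is at most $w'(M')$). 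Squeezing $\sum_i w'(M'_i)\le n\cdot w'(M'_1)\le n\cdot w(e_1)/2\le\sum_i w'(M'_i)$ forces $w'(M'_i)=w(e_1)/2$ for every $i$, in particular $w'(M'_1)=w'(M'_n)$.

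The main obstacle is the strict inequality $2\MMS(\cI')>\MMS(\cI)$ in Case~2; the rest is bundle-by-bundle bookkeeping and a direct appeal to Lemma~\ref{lemma20220105}. The nontrivial observation that makes strictness go through is that the while-loop precondition $w(M_1)>2w(M_n)$ excludes the otherwise-tight configuration in which all $w(M_i)$ coincide with $w(e_1)=\MMS(\cI)$.
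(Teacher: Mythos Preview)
Your proof is correct and follows essentially the same architecture as the paper's: both use the MMS-optimal partition $O$ of $\cI$ as a witness in $\cI'$, and both split each bundle according to whether it meets an edge of~$H$. Case~1 and the strict-inequality argument $w(e_1)>\MMS(\cI)$ in Case~2 are handled identically.

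The one genuine difference is in establishing $w'(M'_1)\le 2\,w'(M'_n)$ in Case~2. The paper argues by contradiction that $|M'_1|\ge 2$, asserting the strict bound $w'(e'_1)<u'(O'_n)$; but from the facts at hand one only gets $w'(e'_1)\le w(e_1)/2\le \MMS(\cI')$, so the strictness is not actually justified there. Your treatment sidesteps this: when $|M'_1|=1$ you observe that every $w'$-edge weight is at most $w(e_1)/2$ (crucially using that all weights are powers of~2) and then squeeze $\sum_i w'(M'_i)$ between $n\cdot w(e_1)/2$ from below (via the witness $O$) and $n\cdot w'(M'_1)\le n\cdot w(e_1)/2$ from above, forcing all parts equal. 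This yields the desired inequality directly, without needing to exclude the $|M'_1|=1$ scenario. Your route is therefore slightly more robust than the paper's at this step, at the cost of one extra subcase.
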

\begin{proof}
We first consider Case 1. 
For any $O_i$, if $w(e) < w(e_1)$ for all $e\in M(O_i)$, then $u(O_i)$ does not decrease. 
If $w(e) \ge w(e_1)$ for some $e\in M(O_i)$, $u(O_i) \ge w(e_1) \ge 2\cdot\MMS(\cI)$ and after decreasing the weights to $ w(e_1)/2$, $u'(O_i) \ge \MMS(\cI)$,
implying the existence of an allocation with the minimum utility no smaller than $\MMS(\cI)$, which means $\MMS(\cI') = \MMS(\cI)$.  

Second, we consider Case 2 when $w(e_1) < 2\cdot\MMS(\cI)$. It is straightforward that $2\cdot \MMS(\cI') > \MMS(\cI)$ since $w(e_1) > \MMS(\cI)$ (If $w(e_1)=\MMS(\cI)$, then $w(e_1)=w(M_1)= \cdots =w(M_n)=\MMS(\cI)>\frac{1}{2}w(M_n)$, since $M^*$ is a maximum (weighted) matching. The Algorithm \ref{alg20210091} moves out of the {\bf while} loop and returns the optimal solution), and after decreasing the weights of some edges to $w(e_1)/2$, $u'(O_i) \ge w(e_1)/2 > \MMS(\cI)/2$ (If $e_1 \in M(O_i)$, $u'(O_i) \ge w(e_1)/2$. Otherwise, $u'(O_i)=u(O_i) \ge \MMS(\cI) > \MMS(\cI)/2$ ). 
Next we show $|M'_1| \ge 2$ which implies $w'(M'_1) \le 2\cdot w'(M'_n)$. 
For the sake of contradiction, assume $M'_1 = \{e'_1\}$. 
Therefore, we derive $w'(e'_1) = w'(M'_1) \ge \cdots \ge w'(M'_n)$,
and thus
\[
w'(M') \le n \cdot w'(e'_1) < n\cdot w'(O'_n) \le \sum_{i\in N} w'(O'_i).
\]
This is a contradiction with $M'$ being a maximum matching in $G$, which completes the proof.
\end{proof}

By Claim \ref{claim:maxmin:while}, the {\bf while} loop will not execute Case 2 or it executes Case 1 for several times and then Case 2 for exactly once. 
If Case 2 is not executed, then the allocation is 1/2-MMS fair and the analysis is the same with the case when the {\bf while} loop is not executed. 

If Case 2 is executed once, then by Claim \ref{claim:maxmin:while},
\[
w'(M'_n) \ge \frac{1}{2}\cdot w'(M'_1) \ge  \frac{1}{2}\cdot  \MMS(\cI') \ge \frac{1}{4}\cdot \MMS(\cI).
\]

Finally, by Lemma \ref{lem:maxmin:rounding}, the allocation is 1/8-MMS for any instance with arbitrary weights.
\end{proof}

When $n=2$, we can improve Algorithm \ref{alg20210091} and obtain a better approximation ratio of 2/3. Algorithm \ref{alg20210811} is similar with Algorithm \ref{alg20210091}; we first compute a maximum matching $M^*$ and a max-min partition $(M_1,M_2)$ with $w(M_1)\ge w(M_2)$. 
If $w(M_1)>2w(M_2)$, we output the corresponding allocation.
Otherwise, in graph $G$, we directly delete the edge that $M_1$ contains. 
We repeat the above procedure until all edges are removed. 
\begin{algorithm}[h]
  \caption{Max-Min Allocation for 2 Agents
    \label{alg20210811}}
  \begin{algorithmic}[1]
    \REQUIRE Instance $\cI=(G,N,u)$ with $G=(V, E; w)$.
  \ENSURE Allocation $\bX=(X_1,  X_2)$.
  \STATE \label{step:maxmin:n20220108}Find a maximum matching $M^*$ in $G$.
   Denote by $V'$ the set of unmatched vertices by $M^*$.
   \STATE Find the greedy partition $(M_1,M_2)$ of edges in $M^*$ such that $w(M_1) \ge w(M_2)$.
    \STATE Let $Max=w(M_2)$.
    \STATE Set $X_1= V(M_1)$ .
    \STATE Set $X_2 = V(M_2) \cup V'$.
  \WHILE{$w(M_1) > 2w(M_2)$}\label{step202201091}
    \STATE {\color{gray} // Note that if the while loop is entered, then $|M_1|=1$.}
  \STATE By Lemma \ref{lemma20220105}. $M_1$ must contain only one edge. Suppose $M_1=\{e^*\}$.
         \STATE Delete edge $e^*$.
         \STATE Re-compute a maximum matching $M^*$.
        \STATE Re-set $V'$ to be unmatched vertices by $M^*$. 
        \STATE Re-compute the greedy partition $(M_1,M_2)$ of $M^*$ such that $w(M_1) \ge  w(M_2)$.
        \IF{$Max<w(M_2)$}
          \STATE $Max=w(M_2)$.
          \STATE Set $X_1= V(M_1)$.
          \STATE Set $X_2 = V(M_2) \cup V'$.
        \ENDIF
  \ENDWHILE\label{step202201092}
 \STATE Output allocation $(X_1, X_2)$.
  \end{algorithmic}
\end{algorithm}
\begin{theorem}
\label{thm:maxmin:2}
Algorithm \ref{alg20210811} outputs an allocation that is $2/3$-approximate max-min fair in polynomial time .
\end{theorem}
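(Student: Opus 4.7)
The plan is to first argue the algorithm terminates in polynomial time by observing that each iteration of the \textbf{while} loop deletes at least one edge, so it runs for at most $|E|$ rounds with a maximum matching computation and a trivial two-bucket greedy partition per round. For the approximation ratio, I would fix an optimal MMS partition $(O_1,O_2)$ with $u(O_2)=\MMS$ and $u(O_1)\ge\MMS$, and use the fact that $M(O_1)\cup M(O_2)$ is a matching in $G$ of total weight at least $2\MMS$, so that the initial maximum matching satisfies $w(M^*_0)\ge 2\MMS$.

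The core observation I would use is this: whenever an iteration $t$ satisfies the exit condition $w(M_{1,t})\le 2w(M_{2,t})$ together with $w(M^*_t)\ge 2\MMS$, then $w(M_{2,t})\ge w(M^*_t)/3\ge \tfrac{2}{3}\MMS$, so the stored $Max$ is at least $\tfrac{2}{3}\MMS$ and the output allocation is $2/3$-MMS. If the \textbf{while} loop is never entered, iteration $0$ already satisfies both hypotheses, and we are done. Otherwise I would track whether any deleted edge $e^*_t$ lies in the OPT matching $M(O_1)\cup M(O_2)$. If none does, then the full OPT matching survives inside $G_T$ at the exit iteration, so $w(M^*_T)\ge 2\MMS$ and the exit-condition bound above finishes the proof.

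The main obstacle is the remaining subcase where some deleted edge $e^*_s$ lies in $M(O_j)$ for some $j$. Taking the smallest such $s$, the OPT matching is still intact in $G_s$, so $w(M^*_s)\ge 2\MMS$; combined with $|M_{1,s}|=1$ (forced by Lemma~\ref{lemma20220105} inside the loop) and $w(e^*_s)>2w(M_{2,s})$, this forces $w(e^*_s)>\tfrac{4}{3}\MMS$. Here the maximum matching in $G_{s+1}$ may drop below $2\MMS$ and the exit-condition bound no longer applies directly. My plan is to exploit the heaviness of $e^*_s$: writing $e^*_s=(u,v)$ with $e^*_s\in M(O_1)$, the matching $(M(O_1)\setminus\{e^*_s\})\cup M(O_2)$ still lies in $G_{s+1}$ with weight at least $u(O_1)-w(e^*_s)+\MMS$, and by augmenting with edges at the newly unmatched endpoints $u,v$, I would argue that at some later iteration $t^*\ge s$ the maximum matching recovers to weight $\ge 2\MMS$ with a balanced greedy partition, yielding $w(M_{2,t^*})\ge \tfrac{2}{3}\MMS$. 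Quantifying this ``recovery''---carefully tracking how the maximum matching weight and the greedy-partition balance co-evolve across the chain of deletions, and verifying that some iteration along the way must trigger an update of $Max$ past $\tfrac{2}{3}\MMS$---is the most technical step and, I expect, the crux of the proof.
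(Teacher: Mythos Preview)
Your polynomial-time argument and your handling of the case where no deleted edge touches $M(O_1)\cup M(O_2)$ are fine and match the paper. The gap is in the remaining subcase: you look \emph{forward} from iteration $s$ and try to argue a ``recovery'' of the maximum-matching weight after deleting $e^*_s$. That plan is both unnecessary and unjustified as stated---the augmentation at the freed endpoints $u,v$ need not exist, and subsequent iterations may delete further edges of $M(O_1)\cup M(O_2)$, so there is no reason the matching weight must climb back to $2\,\MMS$ with a balanced split.

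The observation you are missing is that the target bound is already attained \emph{at} iteration $s$, before $e^*_s$ is removed. Since $e^*_s\in M(O_j)$, both of its endpoints lie in $O_j$, so $\{e^*_s\}\cup M(O_{3-j})$ is itself a matching in $G_s$. Hence
\[
w(M^*_s)\ \ge\ w(e^*_s)+u(O_{3-j})\ \ge\ w(e^*_s)+\MMS,
\]
and therefore $w(M_{2,s})=w(M^*_s)-w(e^*_s)\ge \MMS$. This value of $w(M_{2,s})$ was already stored in $Max$ when $(M_{1,s},M_{2,s})$ was computed, so $Max\ge\MMS$ from this point on and the output is in fact exactly MMS-fair in this case. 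This is precisely the paper's Case~1; it replaces the ``most technical step'' of your plan with a one-line matching bound, and no analysis of later iterations is needed once an OPT edge is hit.
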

\begin{proof}
Given an Instance $\cI=(G, N, u)$ with $G=(V, E; w)$. Denote by $O=(O_1, O_2)$ the optimal solution, where $u(O_1)\ge u(O_2)$ and $\MMS(\cI) = u(O_2)$.
The first time when we reach the {\bf while} loop, if $w(M_1) \le 2\cdot w(M_2)$, allocation $(X_1, X_2)=(V(M_1), V(M_2) \cup V')$ has been output.
By Algorithm \ref{alg20210811}, we have 
\begin{equation}\nonumber
    w(M_1) \geq u(O_1) \geq u(O_2) \geq w(M_2).
\end{equation}
Moreover, 
\begin{equation}\label{equation20211204}\nonumber
\begin{aligned}
w(M_2) &\geq \frac{1}{3}\cdot(w(M_1)+w(M_2)) \\&\geq \frac{1}{3}\cdot(u(O_1)+u(O_2)) \\ & \geq  \frac{1}{3} \cdot 2\cdot u(O_2)=\frac{2}{3}\cdot u(O_2).
\end{aligned}
\end{equation}
We move into the {\bf while} loop if $w(M_1) > 2\cdot w(M_2)$. In such case, $M_1$ contains only one edge, i.e., $|M_1|=1$. Suppose $M_1=\{e^*\}$. There are two subcases:
\begin{itemize}
    \item Case 1\label{lemmacase1}: $e^* \in M(O_1) \cup M(O_2)$ 
    \item Case 2: $e^* \notin M(O_1) \cup M(O_2)$
\end{itemize}
First. consider $|M_1|=1$ and $e^* \in M(O_1) \cup M(O_2)$. We have: $e^* \in M(O_1)$ and $w(M_2)=u(O_2)=\MMS(\cI)$. The optimal solution has been found and recorded. Therefore, the approximation ratio of the max-min partition is $1$. Then the {\bf while} loop is executed for the next round. When $e^* \notin M(O_1) \cup M(O_2)$, edge $e^*$ is deleted. Let $(M'_1, M'_2)$ be the greedy partition after deleting edge $e^*$. Then there are two subcases:
\begin{itemize}
    \item Subcase 1: $|M'_1| \geq 2$
    \item Subcase 2: $|M'_1| = 1$
\end{itemize}
For Subcase 1, we will get out of the {\bf while} loop and a $2/3$-approximate max-min allocation has been determined. For Subcase 2, the {\bf while} loop is executed for the next round. 
Since we are not sure whether $e^* \in M (O_1)\cup M(O_2)$, the {\bf while} loop is executed for at most $O(|V|^2)$ rounds. The output $(X_1, X_2)$ is at least $2/3$-approximate max-min fair allocation.
Thus, the theorem holds.
\end{proof}
\begin{lemma}
Algorithm \ref{alg20210811} outputs an allocation that is $1/2$-approximate max-min fair  by eliminating at most two edges.
\end{lemma}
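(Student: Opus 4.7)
The plan is to trace Algorithm \ref{alg20210811} for at most two iterations of the {\bf while} loop and bound the recorded $Max$ (the $w(M_2)$ of the output) from below by $\frac{1}{2}\MMS(\cI)$. Throughout, write $(O_1, O_2)$ for an MMS-optimal partition with $u(O_1) \ge u(O_2) = \MMS(\cI)$, and for iteration index $i \in \{0,1,2\}$ let $M^{*}_{i}$ and $(M_1^{i}, M_2^{i})$ denote the recomputed maximum matching and its greedy partition.

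Two observations from the proof of Theorem \ref{thm:maxmin:2} do most of the work. (G1) As long as none of the edges of $M(O_1) \cup M(O_2)$ has been deleted, this matching still lives in the current graph and hence $w(M^{*}_{i}) \ge u(O_1)+u(O_2) \ge 2\MMS(\cI)$; combined with the exit condition $w(M_1^{i}) \le 2w(M_2^{i})$, this forces $w(M_2^{i}) \ge \frac{2}{3}\MMS(\cI)$. (G2) If the unique edge $e$ of $M_1^{i}$ lies in $M(O_j)$, then $\{e\} \cup M(O_{3-j})$ is a vertex-disjoint matching of the current graph, so maximality of $M^{*}_{i}$ yields $w(M_2^{i}) \ge u(O_{3-j}) \ge \MMS(\cI)$.

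The case analysis then runs as follows. If the while loop is never entered, (G1) applies with zero deletions. Otherwise Lemma \ref{lemma20220105} forces $|M_1^{0}| = 1$, say $M_1^{0} = \{e^*\}$; if $e^* \in M(O_1) \cup M(O_2)$ apply (G2). Else $e^* \notin M(O_1) \cup M(O_2)$: delete $e^*$, and note that since $M(O_1) \cup M(O_2)$ still survives, (G1) remains available. Repeat the same dichotomy on $e' \in M_1^{1}$: if the while condition fails or $|M_1^{1}| \ge 2$, (G1) or Lemma \ref{lemma20220105} give the bound; if $e' \in M(O_1) \cup M(O_2)$, (G2) finishes; otherwise delete $e'$ and enter the second iteration under the assumption $e^*, e' \notin M(O_1) \cup M(O_2)$.

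The main obstacle lies in the continuation after the second deletion, where the while loop may still be active with $M_1^{2} = \{e''\}$ and $e'' \notin M(O_1) \cup M(O_2)$. Since $M(O_1) \cup M(O_2)$ remains a matching in $G \setminus \{e^*, e'\}$, (G1) still gives $w(M^{*}_{2}) \ge 2\MMS(\cI)$ and hence $w(M_2^{2}) \ge 2\MMS(\cI) - w(e'')$, so it suffices to prove $w(e'') \le \frac{3}{2}\MMS(\cI)$. I split on the location of $e''$ relative to $(O_1, O_2)$: if $e''$ lies entirely within some $O_j$, then $u(O_j) \ge w(e'')$ and $w(M_2^{2}) \ge u(O_1)+u(O_2)-w(e'') \ge u(O_{3-j}) \ge \MMS(\cI)$; if $e''$ crosses the partition, I replace the (at most two) edges of $M(O_1) \cup M(O_2)$ adjacent to the endpoints of $e''$ by $\{e''\}$ itself to form an alternative matching in $G \setminus \{e^*, e'\}$, then compare its weight with $M^{*}_{2}$ and invoke MMS-optimality of $(O_1,O_2)$ (no vertex swap between $O_1$ and $O_2$ may strictly raise the minimum) to pin down $w(e'') \le \frac{3}{2}\MMS(\cI)$. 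This crossing subcase is the delicate step I expect to require the most care.
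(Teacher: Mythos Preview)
Your plan is sound through iterations $0$ and $1$, and observations (G1), (G2), and the within-$O_j$ subcase are all correct. The gap is the crossing subcase. The vertex-swap argument you sketch cannot deliver $w(e'')\le\tfrac{3}{2}\MMS(\cI)$: if $u(O_1)=u(O_2)=\MMS(\cI)$, then moving any single vertex $v$ out of $O_j$ can only lower $u(O_j)$, so $\min\bigl(u(O_j\setminus\{v\}),u(O_{3-j}\cup\{v\})\bigr)\le u(O_j)=\MMS(\cI)$ is automatic and MMS-optimality imposes no constraint whatsoever on $w(e'')$. Your alternative-matching inequality is equally toothless: after cancellation it reads $w(M_2^{2})\ge u(O_1)+u(O_2)-w(f_1)-w(f_2)$, and since $w(f_j)\le u(O_j)$ this collapses to $w(M_2^{2})\ge 0$. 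The deeper issue is that nothing in your plan uses any property of the already-deleted edges $e^*,e'$ beyond ``not in $M(O_1)\cup M(O_2)$,'' and that alone does not control $w(e'')$.

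The paper's route is structurally different. It first proves that $e^*$ and $e'$ must \emph{share a common endpoint}: one shows $w(e')\ge u(O_1)\ge\MMS(\cI)\ge w(M_2^{0})$ (the first inequality from $w(M^{*}_1)\ge u(O_1)+u(O_2)$ together with $w(M_2^{1})\le\MMS(\cI)$), so if $e^*,e'$ were vertex-disjoint then $\{e^*,e'\}$ would be a matching in $G$ of weight $w(e^*)+w(e')>w(e^*)+w(M_2^{0})=w(M^{*}_0)$, contradicting maximality of $M^{*}_0$. With this local structure in hand, the paper then case-splits on how the two non-shared endpoints of $e^*,e'$ interact with edges of $M(O_2)$, and in each case exhibits an explicit matching in $G\setminus\{e^*\}$ (respectively $G\setminus\{e^*,e'\}$) built from $e'$ together with all but one or two edges of $M(O_2)$; comparing its weight to $w(M^{*}_1)$ (respectively $w(M^{*}_2)$) yields $w(M_2^{1})\ge\tfrac12\MMS(\cI)$ or $w(M_2^{2})\ge\tfrac12\MMS(\cI)$ directly. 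To close your crossing case you will need this shared-endpoint lemma or an equivalent structural fact; the swap idea by itself is not enough.
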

\begin{proof}
Given an Instance $\cI=(G, N, u)$ with $G=(V, E; w)$. Denote by $O=(O_1, O_2)$ the optimal solution before eliminating any edge, where $u(O_1)\ge u(O_2)$ and $\MMS(\cI) = u(O_2)$.
Initially, under the maximum matching $M'$, we find the greedy max-min partition $(M_1, M_2)$ such that $w(M_1) \geq w(M_2)$. If $w(M_1) \le 2\cdot w(M_2)$, similar to the proof of Theorem \ref{alg20210811}, $(M_1, M_2)$ is a $2/3$-approximation max-min partition. Next, consider that $M_1$ contains only one edge. Suppose $M_1=\{e_1\}$ and $e_1 \notin M(O_1) \cup M(O_2)$ (otherwise, by Case 1 in Theorem \ref{thm:maxmin:2}, $w(M_2)=u(O_2)=\MMS(\cI)$. The optimal solution has been found). If we eliminate edge $e_1$, under the re-computed maximum matching, we find the greedy max-min partition $(M_1^{'} , M_2^{'})$.  Let $O'=(O'_1, O'_2)$ denote the optimal solution after eliminating edge $e_1$, where $u(O'_1)\ge u(O'_2)$ and $\MMS(\cI') = u(O'_2)$. $M^{'}_1=\{e^{'}_1\}$ and $e^{'}_1 \notin M(O'_1) \cup M(O'_2)$. We first show that the two edges $e_1$ and $e^{'}_1$ have one common endpoint. By Algorithm \ref{alg20210811}, 
\begin{equation}\nonumber
  w(e_1)>u(O_1)\geq u(O_2)>w(M_2),
\end{equation}
 and 
 \begin{equation}\nonumber
 w(e_1^{'})>u(O'_1)\geq u(O'_2)>w(M_2^{'}).
\end{equation}
 Since $e_1 \notin M(O_1) \bigcup M(O_2)$, eliminating edge $e_1$ does not change the optimal solution, i.e., $O=O'$. Hence, $w(e_1^{'})>w(M_2)$. Furthermore, 
 \begin{equation}\nonumber
     w(e_1)+w(e_1^{'})>w(e_1)+w(M_2).
 \end{equation}
 Therefore, edges $e_1$ and $e_1^{'}$ make a maximum matching, which implies that there exists an allocation to improve the max-min value from $u(O_2)$  to $w(e'_1)$. It results in a contradiction. Hence, the two edges $e_1$ and $e^{'}_1$ have one common endpoint. Therefore, there are two cases. First, we consider Case 1 (as shown in Figure \ref{figure202108221}): suppose $a_1$ and $a_2$ are two edges in $M(O_2)$, and without loss of generality, $w(a_1) \geq w(a_2)$. The two endpoints of edge $a_i, i \in [2]$ is denoted by $e_1(a_i)$ and $e_2(a_i)$. Denote $O_2^{*}=O_2/\{e_1(a_1), e_2(a_1), e_1(a_2), e_2(a_2)\}$. 
 Thus
 \begin{equation}\nonumber
 \max({w(a_1), w(a_2)}) \geq \frac{1}{2}(w(a_1)+w(a_2)),
 \end{equation}
and then
  \begin{equation}\nonumber
 w(a_1)+u(O_2^{*}) \geq \frac{1}{2}u(O_2).
 \end{equation}
What's more
\begin{equation}\nonumber
w(M_2^{'}) \geq w(a_1)+u(O_2^{*})
\end{equation}
holds; otherwise, replacing $M_2^{'}$ with $\{a_1\} \cup M(O_2^{*})$ can make a matching with larger welfare. Therefore, $w(M_2^{'}) \geq 1/2\cdot u(O_2)$. Next, we consider Case 2 (as shown in Figure \ref{figure202108222}). Suppose after the two edges $e_1$ and $e_1^{'}$ have been deleted, under the new maximum matching $M''$, we find the greedy partition $(M_1^{''}, M_2^{''})$. If $w(M''_1) \le 2 \cdot w(M''_2)$, by Theorem \ref{thm:maxmin:2}, the $2/3$-approximate max-min partition can be found. Otherwise, $|M_1^{''}| = 1$, assume $M_1^{''}=\{e''_1\}$. Without loss of generality, suppose edge $e''_1$ shares one common point with edge $a_2$. Let $O''_2=O_2 \setminus \{e_1(a_1), e_2(a_1), e_1(a_2), e_2(a_2)\}$, then 
\begin{equation}\nonumber
w(M_2^{''}) \geq w(a_1)+u(O''_2)\geq \frac{1}{2}u(O_2).
\end{equation}
Thus a $1/2$-approximate max-min partition has been found, and the lemma holds.
\end{proof}
\begin{figure}[htb]
\centering
\includegraphics[width=2.0in]{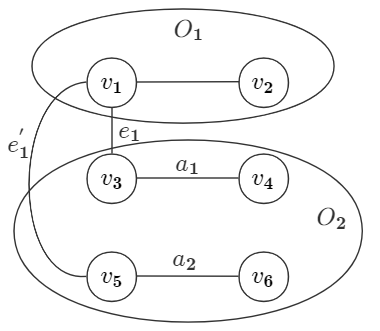}
\vspace{0.1in}
\caption{The graph is allocated to two agents with identical valuations. Two large edges $e_1$ and $e'_1$ have one common endpoint and their other endpoints are incident to two distinct edges.}\label{figure202108221}
\end{figure}
\begin{figure}
\centering
\includegraphics[width=2.0in]{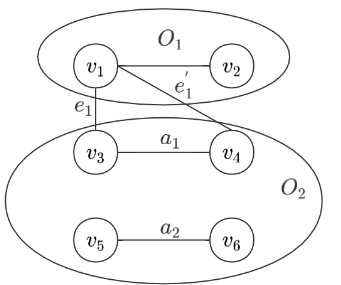}
\vspace{0.1in}
\caption{The graph is allocated to two agents with identical valuations. Two large edges $e_1$ and $e'_1$ have one common endpoint and their other endpoints are connected by an edge.}\label{figure202108222}
\end{figure}

\section{EF1 Allocations}
Recall the example in Figure \ref{figure20210105}.
The maximum social welfare is $\sw^* = \Delta$, but any bounded-approximate MMS allocation has social welfare $2 \ll \Delta$, which means to ensure (approximate) MMS, we have to sacrifice a significant amount of efficiency. 
Thus in this section, we turn to study EF1 allocations, whose existence is guaranteed \citep{DBLP:conf/sigecom/LiptonMMS04}.
An arbitrary EF1 allocation does not have any social welfare guarantee, and our goal in this section is to compute an EF1 allocation that also preserves high social welfare.
\subsection{General Heterogeneous Agents}\label{fair:heterogeneous}

Unfortunately, for the general case, we found that EF1 allocations cannot have good social welfare either. Note that the optimal social welfare $\sw^*$ is no longer the maximum matching under a single metric, which can be computed by
\[
\sw^* = \max_{X \in \Pi_n(V)} \sum_{i\in N}u_i(X_i).
\]
We have the following theorem.
\begin{theorem}
\label{thm:heter:ef1:negative}
For heterogeneous agents, no EF1 allocation can guarantee better than $1/n$ fraction of the optimal social welfare without fairness constraints. 
\end{theorem}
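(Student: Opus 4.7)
The plan is to construct, for each $n$, an explicit heterogeneous instance $\cI=(G,N,\w)$ whose unconstrained optimum is $n$ times the social welfare of any EF1 allocation. The lever I want to use is specific to matching utilities: a bundle of vertices can carry a matching value that is \emph{robust} under removing any single vertex (for example, an odd clique, an odd cycle, or the perfect matchings living on a $4$-cycle), which makes EF1 very restrictive as soon as several agents simultaneously value the same bundle.

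The template I would use places a central ``robust'' gadget $H$ inside $G$, picks a parameter $T$, and chooses weight functions $w_1,\ldots,w_n$ so that (i) the unconstrained optimum $\sw^*(\cI)=\Theta(nT)$ is attained by giving all of $H$'s vertices to a single agent, and (ii) every other agent also values $H$'s edges at enough weight that concentrating $H$ produces an envy of order $\Theta(nT)$ which, by robustness of $H$, no single-vertex removal can eliminate. The unrestricted optimum therefore fails EF1. I would then argue that in any EF1 allocation the gadget $H$ must be shattered: if some agent $j$'s bundle $X_j$ still contains a $\Theta(T)$ chunk of $H$, then by robustness $u_i(X_j\setminus\{g\})$ remains $\Theta(T)$ for every $i$ and every $g\in X_j$, forcing $u_i(X_i)\ge \Theta(T)$ for \emph{every} $i$ simultaneously. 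Combined with the matching-number bound $\sum_i u_i(X_i)\le T\cdot |V|/2$, these requirements cannot be met unless $X_j$ itself is split so that $u_j(X_j)$ drops to at most $T$, capping the total welfare at $\Theta(T)=\sw^*(\cI)/n$.

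The main obstacle, and where I expect the technical work to concentrate, is designing the robust gadget $H$ so that the ratio $1/n$ is matched exactly rather than some weaker constant. For $n=2$ a clean realization already exists: the $4$-cycle $C_4$ carries two perfect matchings that cover the same four vertices in incompatible pairings, so no EF1 allocation can realize both simultaneously and the best EF1 value is $T=\sw^*/2$, yielding the desired $1/2$. Extending the incompatibility to $n$ perfect matchings cannot rely on bipartite designs like $K_{n,n}$, since classical rainbow-matching theorems (of Aharoni--Berger type) guarantee that one can pick one edge from each matching to form a vertex-disjoint transversal, which would allow an EF1 allocation to recover the unconstrained optimum. I would therefore pursue either a chain of $C_4$-style gadgets linked by shared vertices (so that the EF1 constraint on one gadget propagates to all), or a non-bipartite odd gadget in which rainbow matchings are provably absent. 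Verifying the factor $1/n$ then reduces to a counting argument combining the robustness of $H$ with the EF1 inequality applied to every pair of agents.
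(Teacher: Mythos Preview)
Your proposal is incomplete and, more importantly, is working much harder than necessary. You never exhibit a construction for general $n$; you acknowledge this and list research directions (chains of $C_4$'s, odd gadgets, rainbow-matching obstructions), but none is carried out. Moreover, the counting sketch you give does not deliver the factor $1/n$: you argue that if one bundle retains a $\Theta(T)$ chunk then \emph{every} agent must have value $\Theta(T)$ by EF1, and then appeal to a matching bound to force each $u_j(X_j)\le T$. But ``each agent has value at most $T$'' caps the welfare at $nT=\Theta(\sw^*)$, not at $T$; the argument as written yields only a constant ratio, not $1/n$.

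The paper's construction avoids all of this by exploiting \emph{value asymmetry} rather than structural robustness. Take $G$ to be $n$ vertex-disjoint edges. Let agent $1$ value every edge at $1$ and let every other agent value every edge at some tiny $\epsilon>0$. Then $\sw^*=n$ (give everything to agent $1$). In any EF1 allocation, if agent $1$ holds $k\ge 2$ complete edges, then for each $j\neq 1$ the EF1 inequality forces $u_j(X_j)\ge (k-1)\epsilon$, i.e.\ $X_j$ must contain at least $k-1$ complete edges; summing over all agents gives $k+(n-1)(k-1)\le n$, which fails for $k\ge 2$. Hence agent $1$ holds at most one edge, and every EF1 allocation has welfare at most $1+(n-1)\epsilon$, giving the ratio $(1+(n-1)\epsilon)/n\to 1/n$.

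Note the contrast with your template: here removing a vertex from agent $1$'s bundle \emph{completely destroys} an edge---the opposite of robustness. What drives the bound is that agents $2,\ldots,n$ have \emph{tiny but nonzero} values, so the EF1 constraint from each of them independently limits how many edges agent $1$ may keep, while they contribute almost nothing to welfare themselves. Your approach instead tries to make every agent value the gadget highly; as your own rainbow-matching observation suggests, that tends to make good EF1 allocations \emph{easier} to find, not harder, and in fact for homogeneous valuations the paper later shows a constant-factor welfare guarantee, so no $1/n$ gap can come from a gadget that all agents value comparably.
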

\begin{proof}
Now, we give an instance where, for any $\epsilon>0$, every EF1 allocation has social welfare at most $(1/n+\epsilon)\cdot \sw^*$.
If $\epsilon \ge 1 -1/n$, it holds trivially since no allocation can have social welfare more than $\sw^*$.
In the following, we assume $\epsilon < 1 -1/n$.

\begin{figure}[h]
\centering
\includegraphics[width=1.5in]{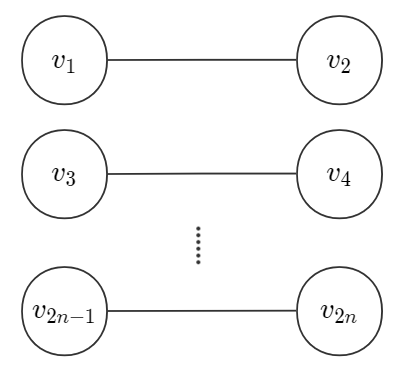}
\caption{A graph with $n$ disjoint edges is allocated to $n$ agents.}\label{figure20210106}
\end{figure}

Consider a graph with $n$ disjoint edges, as shown in Figure \ref{figure20210106}, which is to be allocated to $n$ agents. For each edge, agent $1$ has value~1, and the other agents have value~$\epsilon$. The maximum social welfare $\sw^*=n$ is achieved by allocating all edges to agent 1.
However, to guarantee EF1, at most one edge can be given to agent~1. The maximum welfare of an EF1 allocation is therefore at most $1+(n-1)\cdot \epsilon$ (each agent receives exactly one edge). The largest ratio is
\[
\frac{1+(n-1)\cdot \epsilon}{n}<\frac{1}{n}+\epsilon,
\]
which completes the proof of the Theorem, since $\epsilon$ can be arbitrarily small constant.\end{proof}
We now present a polynomial-time EF1 algorithm that achieves $\Omega(1/n^2)$ approximation of the social welfare for the general case.
\begin{theorem}\label{ef1:theorem:arbitrary}
For any instance $\cI=(G,N)$, Algorithm~\ref{alg:EF1:arbitr} returns an EF1 allocation with social welfare at least $1/(4n^2) \cdot \sw^*(\cI)$ in polynomial time.
\end{theorem}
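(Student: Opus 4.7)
The plan is to combine a welfare-focused seeding step with Lipton et al.'s envy-cycle procedure so that the output allocation is EF1 and preserves a $1/(4n^2)$ fraction of $\sw^*$. The key structural observation I would use is the identity $\sw^*(\cI) = w^*(M^{**})$, where $w^*(e) := \max_{i \in N} w_i(e)$ and $M^{**}$ is a maximum-weight matching in $G$ under $w^*$. Indeed, $\sw^* \le w^*(M^{**})$ holds because in any allocation the per-agent matchings are pairwise vertex-disjoint, so their union is a matching of $G$ whose $w^*$-weight upper-bounds the social welfare; the reverse inequality follows by assigning each edge $e \in M^{**}$ to an agent that achieves $w^*(e)$. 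This reduces the task to distributing the weight of a single weighted matching $M^{**}$ among $n$ agents while maintaining EF1.

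For the seed, I would select at most $n$ edges of $M^{**}$ and match them to the $n$ agents via a bipartite matching maximizing $\sum_i w_i(e_i)$: use every edge of $M^{**}$ if $|M^{**}| \le n$, otherwise restrict to the $n$ heaviest edges. A pigeonhole/averaging argument shows the aggregate seed weight is $\Omega(\sw^*/n)$. This partial allocation is trivially EF1 because each bundle holds at most one matching edge and removing either of its endpoints leaves a singleton of utility zero. I would then extend the seed to a full allocation of $V$ by Lipton's envy-cycle algorithm: repeatedly add an unallocated vertex to an unenvied agent, and whenever every agent is envied, rotate bundles along an envy cycle. Both operations preserve EF1 under monotone valuations (which matching-induced utilities satisfy), and envy-cycle rotations weakly increase social welfare because each rotated agent receives a bundle she valued at least as much as her own. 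Thus the final welfare is at least the seed's aggregate.

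The main obstacle is closing the gap between the aggregate seed bound $\Omega(\sw^*/n)$ and the theorem's per-allocation target $\Omega(\sw^*/n^2)$. The gap arises because even when the aggregate seed is $\Omega(\sw^*/n)$, its weight may be spread across agents so that no individual agent retains more than $\Theta(\sw^*/n^2)$; worse, envy-cycle rotations can re-shuffle seed bundles among agents with different weight functions, potentially degrading the per-agent contribution further. I would handle this by using the monotonicity inequality $u_i(X_j) \ge u_i(X_i)$ for each rotation to argue that the total welfare cannot drop by more than an additional $1/n$ factor relative to the initial seed aggregate, and by observing that the constant $4$ in $1/(4n^2)$ absorbs the $1/2$ slack from both restricting to the top-$n$ edges of $M^{**}$ and from the monotonicity bookkeeping. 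The delicate part is verifying that the concrete algorithm's rotations do not cumulatively lose more than this single factor-of-$n$, which likely requires tracing the seed edges throughout the envy-cycle phase and bounding the per-rotation welfare loss on the heaviest seed.
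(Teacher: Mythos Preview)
Your seeding step does not deliver the bound you claim, and this is the central gap. Take $m$ pairwise-disjoint edges with $m \gg n$, where agent~$1$ has weight~$1$ on every edge and every other agent has weight~$0$ everywhere. Then $\sw^* = m$, but any one-edge-per-agent seed gives aggregate value exactly~$1$ (only agent~$1$'s edge contributes), which is $\sw^*/m$, not $\Omega(\sw^*/n)$. No pigeonhole or averaging argument on the top-$n$ edges of $M^{**}$ can fix this, because the top-$n$ edges under $w^*$ may all be ``owned'' by the same agent. So your chain ``seed aggregate $\ge \Omega(\sw^*/n)$, envy-cycle rotations never decrease welfare, done'' breaks at the first link. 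Your final paragraph also reads the gap backwards: if the aggregate really were $\Omega(\sw^*/n)$, you would have proved a \emph{stronger} statement than the theorem, not a weaker one, so there would be nothing to close.

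The paper sidesteps exactly this obstruction by never trying to control the aggregate directly. Instead it fixes a single reference agent $i^*$ with $u_{i^*}(V) = \max_j u_j(V) \ge \sw^*/n$, builds the seed from \emph{her} maximum matching $M_{i^*}$, and guarantees that $i^*$ receives the heaviest edge of $M_{i^*}$ first. The algorithm is designed so that, in the final allocation, $\sum_j u_{i^*}(X_j) \ge \tfrac{1}{2}\,u_{i^*}(V)$ (the edges of $M_{i^*}$ are distributed without being split across more than one bundle per edge, up to a factor~$2$). The EF1 property together with ``$i^*$ already holds the heaviest edge'' then yields $u_{i^*}(X_{i^*}) \ge \tfrac{1}{2}\,u_{i^*}(X_j)$ for every $j$, hence $n\cdot u_{i^*}(X_{i^*}) \ge \tfrac{1}{2}\sum_j u_{i^*}(X_j) \ge \tfrac{1}{4}\,u_{i^*}(V) \ge \tfrac{1}{4n}\,\sw^*$, and the social welfare is at least $u_{i^*}(X_{i^*}) \ge \sw^*/(4n^2)$. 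The two ideas you are missing are (i) anchor the analysis on one agent whose value for the whole graph already controls $\sw^*$ up to a $1/n$ loss, and (ii) ensure that agent is seeded with the single heaviest edge so that EF1 converts ``sum over bundles in $i^*$'s metric'' into a lower bound on $u_{i^*}(X_{i^*})$.
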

\begin{algorithm}[htb]
  \caption{Computing EF1 Allocations for $n$ Heterogeneous Agents with Distinct Weights
    \label{alg:EF1:arbitr}}
  \begin{algorithmic}[1]
  \REQUIRE Instance $\cI=(G,N,w)$ with $G=(V, E)$. 
  \ENSURE Allocation $\bX=(X_1, \cdots, X_n)$.
  \STATE Initialize $X_i \leftarrow \emptyset, i \in N$. 
  Let $M_i$ be the maximum matching in $G[X_i]$ for agent $i$. 
  Denote by $\cG'=(N, \cE)$ the envy-graph on $\bX$. 
  \STATE Let $P = V\setminus (X_1\cup\cdots\cup X_n)$ be the set of unallocated items (called {\em pool}). 
  \STATE Denote $H$ as the set of agents who are not envied by any other agents.  Initialize $H \leftarrow N$.
  \STATE Let agent $i^*$ determine a maximum matching $M_{i^*}$ in graph $G$. Denote by $R$ the set of the remaining edges within the maximum matching.
  Initialize $R \leftarrow M_{i^*}$.
  \STATE Sort the edges $e \in M_{i^*}$ by non-increasing order according to their weight to agent $i^*$.
  \STATE Let agent $i^*$ pick one edge with largest weight $w_{i^*}(e)$ (with ties broken arbitrarily). 
  \WHILE{\{$R \neq \emptyset$\}} \label{ef1:arbitrar:while:first}
  \STATE Select one agent $i \in H$.
  \IF{\{$P_i=\emptyset$\} }
          \STATE Select one edge $e \in R$ with largest weight to agent $i^*$. Give one endpoint $v_1$ of $e$ to agent $i$ and put another endpoint $v_2$ in the corresponding pool $P_i$, i.e., $R \leftarrow R \setminus \{e\}$, $X_i \leftarrow X_i \cup \{v_1\}$, $P_i \leftarrow P_i \cup \{v_2\}$, $P \leftarrow P \setminus \{v_1, v_2\}$.
          \STATE Update the envy-graph $\cG'$ and set  $H$.
    \ELSE
         \STATE Give the node $v \in P_i$ to agent $i$, i.e., $P_i \leftarrow \emptyset$, $X_i \leftarrow X_i \cup \{v\}$.
         \STATE Update the envy-graph $\cG'$ and set $H$.
         \ENDIF 
  \ENDWHILE
  \STATE Return all the vertices within $P_i$ to the pool $P$, i.e., $P \leftarrow P \bigcup_{i \in N}P_i$.
 \STATE Execute the envy-cycle elimination procedure running on the remaining items $P$.
\STATE Return the allocation $(X_1, \cdots, X_n)$.
  \end{algorithmic}
\end{algorithm}
Without loss of generality, we assume $i^*$ to be the agent who has the maximum value of $u_i(V), i \in N$. Denote by $S=(S_1, \cdots, S_n)$ the partial allocation when we first move out of the {\bf while} loop in Step \ref{ef1:arbitrar:while:first}. Before presenting the proof of Theorem \ref{ef1:theorem:arbitrary}, we first present a useful lemma.
\begin{lemma}\label{ef1:lemma:0.5}
$\sum\limits_{i \in N}u_{i^*}(X_i) \geq \frac{1}{2}u_{i^*}(V)$.
\end{lemma}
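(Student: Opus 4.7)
The plan is to view everything through agent $i^*$'s maximum matching $M_{i^*}$, which has total weight $u_{i^*}(V)$, and argue that at least half of $w_{i^*}(M_{i^*})$ is preserved by the partial allocation $(S_1,\ldots,S_n)$ at the moment the \textbf{while} loop exits. Since $X_i$ only grows afterwards, this immediately gives $\sum_{i\in N} u_{i^*}(X_i)\geq \tfrac12 u_{i^*}(V)$.

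First I would classify each edge $e=(v_1,v_2)\in M_{i^*}$ by tracking its fate during the loop. Each such $e$ is consumed in exactly one type-1 step by some agent $i$, which places $v_1\in X_i$ and $v_2\in P_i$. Call $e$ \emph{fully allocated} if a subsequent type-2 step for the same agent $i$ moves $v_2$ into $X_i$ before the loop exits, and \emph{half allocated} otherwise (so that $v_2$ is returned to the global pool $P$). Let $E_f$ and $E_h$ be the fully and half allocated edges, so that $M_{i^*}=E_f\sqcup E_h$. For each $i$, the fully allocated edges of $M_{i^*}$ with both endpoints in $S_i$ form a valid matching in $G[S_i]$ under $w_{i^*}$, hence $u_{i^*}(S_i)\ge \sum_{e\in M_{i^*},\,e\subseteq S_i} w_{i^*}(e)$; summing gives $\sum_i u_{i^*}(S_i)\ge w_{i^*}(E_f)$, so the lemma reduces to $w_{i^*}(E_f)\ge w_{i^*}(E_h)$.

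Next I would analyze each agent $i$'s sequence of type-1 grabs. Because a type-1 step requires $P_i=\emptyset$, consecutive grabs of agent $i$ must be separated by a type-2 that clears $P_i$. Consequently, among the $m_i$ edges grabbed by agent $i$ within the loop, the first $m_i-1$ are guaranteed to be fully allocated inside $X_i$, and at most the very last one $e_i^{\text{last}}$ can be half allocated. Moreover, since edges are chosen in globally non-increasing $w_{i^*}$-order, each of agent $i$'s earlier grabs has $w_{i^*}$-weight at least $w_{i^*}(e_i^{\text{last}})$. So whenever $m_i\ge 2$, agent $i$'s own fully-allocated contribution already dominates its half-allocated one, and the desired per-agent charging is straightforward.

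The main obstacle is the corner case $m_i=1$: an agent grabs exactly one edge which is then half allocated, so that agent contributes nothing to $E_f$ while still contributing $w_{i^*}(e_i^{\text{last}})$ to $E_h$. To absorb these singletons I would use agent $i^*$'s designated initial pick of the heaviest edge (which is full by construction) as an anchor, and then argue via the envy-graph dynamics that an agent with $m_i=1$ could not re-enter $H$ during the remaining iterations only because some other agent $j$ had already completed heavy edges into $X_j$, producing substantial fully-allocated weight. Concretely, I would define a charging that maps every half-allocated edge to a distinct earlier grabbed edge (necessarily of weight at least its own) by a Hall-type matching argument: at the step when $e_i^{\text{last}}$ was grabbed, all previously grabbed edges were at least as heavy, and the envy constraints ensure that a sufficient number of them ended up in $E_f$. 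Combining this with the $m_i\ge 2$ charging above yields $w_{i^*}(E_h)\le w_{i^*}(E_f)$, and the lemma follows.
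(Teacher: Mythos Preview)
Your high-level strategy matches the paper's: both work with the partial allocation $(S_1,\ldots,S_n)$ and charge each half-allocated edge $e_i$ against the fully-allocated content already sitting in the same bundle $S_i$. The paper does this in one line by asserting, from the non-increasing processing order, that $u_{i^*}(S_i) \ge w_{i^*}(e_i)$ for every $i \in N_1$, and then deduces $u_{i^*}(S_i) \ge \tfrac12\, u_{i^*}(S_i\cup\{v_i\})$ agent by agent; summing gives the lemma. Your decomposition into $E_f$ and $E_h$ is an equivalent rephrasing of that per-bundle bound, and the paper does not separate out the $m_i=1$ case at all.

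The gap is precisely in the $m_i = 1$ case you isolate, and your proposed repair does not go through. The envy-graph premise is backwards: a bundle consisting of a single vertex has value $0$ under every agent's matching valuation, so such an agent is never envied and therefore never leaves $H$. The only reason an $m_i=1$ agent is not selected again is that the algorithm may pick any member of $H$ and can simply choose other agents until $R$ empties. Concretely, take $n$ agents and let $M_{i^*}$ consist of $n$ unit-weight edges; after $i^*$ takes the first edge fully, a valid run of the \textbf{while} loop selects each of the remaining $n-1$ agents exactly once and then terminates, yielding $w_{i^*}(E_f)=1$ and $w_{i^*}(E_h)=n-1$. No Hall-type injection from $E_h$ into $E_f$ can exist in this run, so the reduction to $w_{i^*}(E_f)\ge w_{i^*}(E_h)$ is already too lossy to close the argument along the lines you sketch.
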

\begin{proof}
     During the execution of Algorithm \ref{alg:EF1:arbitr}, there is at most one node $v_i$ in pool $P_i, i\in [n]$. Consider $P_i$ when we first move out of the {\bf while} loop in Step \ref{ef1:arbitrar:while:first}. If $P_i \neq \emptyset$, w.l.o.g. suppose $v_i \in P_i$ is one endpoint of edge $e_i \in M_{i^*}$. Let $N_1$ be the set of agents such that $P_i \neq \emptyset$.  Since the edges are picked by non-increasing order of their weight to agent $i^*$, we have $u_{i^*}(S_i) \geq w_{i^*}(e_i), i \in N_1$. Furthermore, we have
 \begin{equation*}
     2u_{i^*}(S_i) \geq u_{i^*}(S_i)+w_{i^*}(e_i) \geq u_{i^*}(S_i \cup \{v_i\}).
 \end{equation*}
 Thus, $u_{i^*}(S_i) \geq \frac{1}{2}u_{i^*}(S_i \cup \{v_i\}), i \in N_1$.
We have
\begin{align*}
    \sum\limits_{i \in N}u_{i^*}(X_i) & \geq  \sum\limits_{i \in N}u_{i^*}(S_i)\\ &  \geq \frac{1}{2}\sum\limits_{i \in N_1}u_{i^*}(S_i \cup \{v_i\})+\sum\limits_{i \in N \setminus N_1}u_{i^*}(S_i) \\ &  \geq \frac{1}{2}\sum\limits_{i \in N_1}u_{i^*}(S_i \cup \{v_i\})+\frac{1}{2}\sum\limits_{i \in N \setminus N_1}u_{i^*}(S_i)\\ &= \frac{1}{2}\sum\limits_{i\in N}u_{i^*}(V),
\end{align*}
where the last equality holds because the nodes within the remaining pool $P$ after we move out of the {\bf while} loop in Step \ref{ef1:arbitrar:while:first} do not have any effect on the maximum matching $M_{i^*}$.
We complete the proof of Lemma \ref{ef1:lemma:0.5}.
\end{proof}
Now we are ready to prove Theorem \ref{ef1:theorem:arbitrary}.
\begin{proof}[Proof of Theorem \ref{ef1:theorem:arbitrary}]
Let $N_e$ be the set of agents that agent $i^*$ envies. Since Algorithm \ref{alg:EF1:arbitr} admits EF1, there exists one node $v \in X_i, i \in N_e$ such that $u_{i^*}(X_{i^*}) \geq u_{i^*}(X_i \setminus \{v\}), i \in N_e$. For any agent $i \in N_e$, w.l.o.g. assume $v_1$ is the node such that $u_{i^*}(X_{i^*}) \geq u_{i^*}(X_i \setminus \{v_1\})$ and  $v_1, v_2 \in S_i$ are two endpoints of the edge $e \in M_{i^*}$. Since agent $i^*$ first picks the edge with the largest weight to itself, we have $u_{i^*}(X_{i^*}) \geq w_{i^*}(e)=u_{i^*}(\{v_1, v_2\})$. By the definition of EF1, $u_{i^*}(X_{i^*}) \geq u_{i^*}(X_i \setminus \{v_1\}) \geq u_{i^*}(X_i \setminus \{v_1, v_2\})$ holds. Thus, we have $u_{i^*}(X_{i^*}) \geq \frac{1}{2}(u_{i^*}(X_i \setminus \{v_1, v_2\})+u_{i^*}(\{v_1, v_2\}))=\frac{1}{2}u_{i^*}(X_i)$. Let $(X^*_1, \cdots, X^*_n)$ be the welfare maximization allocation, where $X^*_i, i \in N$ is the set of vertices allocated to agent $i$.
Therefore
\begin{align*}
    n \cdot u_{i^*}(X_{i^*}) & \geq \frac{1}{2} \sum\limits_{i \in N} u_{i^*}(X_i)\geq \frac{1}{2} \cdot \frac{1}{2} u_{i^*}(V)\\ &\geq \frac{1}{4} \cdot \frac{1}{n}\sum\limits_{i \in N}u_i(V)\geq \frac{1}{4n}\sum\limits_{i \in N}u_i(X_{i}^*)\\&=\frac{1}{4n}\sw^*(\cI),
\end{align*}
where the second inequality follows by Lemma \ref{ef1:lemma:0.5} and the third inequality holds because of the assumption that $i^*$ is the agent with the largest value of $u_i(V), i \in N$. Since in each iteration one node is allocated to an agent, the time complexity of Algorithm \ref{alg:EF1:arbitr} is at most $O(|V|^2)$, completing the proof of Theorem \ref{ef1:theorem:arbitrary}.
\end{proof}
Due to this hardness result of general case, in the following three sub sections, we present three cases when EF1 allocations manage to ensure constant fraction of the optimal social welfare.
\subsection{Binary Weight Functions}
\begin{algorithm}[!htb]
  \caption{Computing EF1 Allocations for $n$ Heterogeneous Agents with Binary Weights
    \label{alg:EF1:n04091}}
  \begin{algorithmic}[1]
  \REQUIRE Instance $\cI=(G,N,\w)$ with $G=(V, E)$.
  \ENSURE Allocation $\bX=(X_1, \cdots, X_n)$.
  \STATE Initialize $X_i \leftarrow \emptyset, i \in N$.  Let $M_i$ be the maximum matching in $G[X_i]$ for agent $i$. 
  Denote by $\cG'=(N, \cE)$ the envy-graph on $\bX$. 
  \STATE Let $P = V\setminus (X_1\cup\cdots\cup X_n)$ be the set of unallocated items (called {\em pool}).  
  
  
  \STATE Partition agents $i \in N$ into $k$ groups ${\bf A}(\bX)=(A_1, \cdots, A_k)$ such that agents in the same group have the same value, i.e., $u_i(X_{i}) = u_j(X_j)$ for $i,j \in A_l$ and $l\in [k]$.
  Assume $A_l$'s are ordered, i.e., $u_{i}(X_{i})< u_{j}(X_{j})$ for agents $i \in A_{t_1}$, $j \in A_{t_2}$ and $t_1 < t_2$.
  \STATE Let $t \leftarrow 1$ and $\tau \leftarrow |{\bf A}|$.
  \WHILE{\{$t\leq \tau$\}}\label{step:20220411}
  \STATE {\color{gray} // Case 1. Directly Allocate}
  \IF{there exists an agent $i \in A_{t}$ such that (1) there is an edge $e$ in $G[P]$ with $w_i(e)=1$ and (2) allocating the two endpoints $v_1, v_2$ of $e$ to agent $i$ does not break EF1 }
        \STATE 
        $X_i \leftarrow X_i \cup \{v_1,v_2\}$, $P \leftarrow P \setminus \{v_1,v_2\}$. \label{Allo:single:egde:ef1}
         \STATE Update $u_i(X_i)$ for $i \in N$ and the envy-graph $\cG'$.
        \STATE Update the partition of agents in $\bf{A}$.
        \STATE Reset $t \leftarrow 1$ and $\tau \leftarrow |\bf{A}|$.
    \STATE {\color{gray} // Case 2. Exchange and Allocate}
    \ELSIF{there exists agent $j \in N$ and $i \in A_t$ such that $j$ envies $i$ and there exists a subset with minimum size $V^* \subseteq P $ in graph $G$ such that $u_i(V^*) = u_i(X_i)$} \label{exchanging:from:ef1}
     \STATE Let $V^* \subseteq P $ be a set with minimum size such that $u_i(V^*) = u_i(X_i)$.
     \STATE Let $V^*_j \subseteq X_i$ be a set with minimum size such that $u_j(V^*_j)=u_j(X_j)+1$.
     
     \STATE  
     $P \leftarrow (P \setminus V^*) \cup X_j \cup (X_i \setminus V^*_j)$.
     \STATE
     $X_i \leftarrow V^*$, 
     $X_j \leftarrow V^*_j$.\label{exchanging:to:ef1}
    

        \STATE Update $u_i(X_i)$ for $i \in N$ and the envy-graph $\cG'$.
        \STATE Update the partition of agents in $\bf{A}$.
        \STATE Reset $t \leftarrow 1$ and $\tau \leftarrow |\bf{A}|$.
         \ELSE
         \STATE {\color{gray} // Case 3. Skip the Current Agent}
         \STATE $t \leftarrow t+1$.
    \ENDIF 
    \ENDWHILE \label{Alg:end_while_1}
    
    
    \STATE Execute the envy-cycle elimination procedure on the remaining items $P$. \label{Alg:binary:envy-cycle}
\STATE Return the allocation $(X_1, \cdots, X_n)$. 
  \end{algorithmic}
\end{algorithm}

We first show that if the agents have binary weight functions, we can compute an EF1 allocation whose social welfare is at least 1/3 fraction of the optimal social welfare.
Before introducing our algorithm, we recall the {\em envy-cycle elimination algorithm} proposed by \cite{DBLP:conf/sigecom/LiptonMMS04}, which always returns an EF1 allocation.
Given a (partial) allocation $(X_1,\cdots,X_n)$, we construct the corresponding {\em envy graph} $\cG'=(N, \cE)$, where the nodes are agents (and thus are used interchangeably) and  there is a directed edge from agent $i$ to agent $j$ if and only if $u_i(X_i) < u_i(X_j)$.
The {\em envy-cycle elimination algorithm} runs as follows.
We first find an agent who is not envied by the others, and allocate a new item to her.
If there is no such an agent, there must be a cycle in the corresponding envy graph. 
Then we resolve this cycle by reallocating the bundles: every agent gets the bundle of the agent that she envies in the cycle. 
We repeat resolving cycles until there is an unenvied agent.
The above procedures continue until all the items are allocated.
Note that in the execution of the algorithm, the agents' utilities can only increase, and  the returned allocation is EF1.

It is not hard to verify that the envy-cycle elimination algorithm does not have any social welfare guarantee, which is illustrated by the following example.  Consider a path of four nodes $v_1 \to v_2 \to v_3 \to v_4$, and two agents have the same weight 1 on all three edges $(v_1,v_2)$, $v_2,v_3$ and $v_3,v_4$.
    By ency-cycle elimination algorithm, we may first allocate the items in the following order: $v_1$ to agent 1, then $v_2$ to agent 2, then $v_3$ to agent 1 and finally $v_4$ to agent 2.
    Note that $u_1(\{v_1,v_3\}) = v_2(\{v_2, v_4\}) = 0$, however, the optimal social welfare is 2 by allocating $\{v_1,v_2\}$ to agent 1 and $\{v_3,v_4\}$ to agent 2. 
    Thus the approximation ratio of the social welfare is unbounded. There are several reasons. First, the algorithm does not control which item should be allocated to the unenvied agent so that the agent may receive a set of independent vertices. Second, once an item is allocated it cannot be recalled so that we are not able to revise any bad decision we have made.
To increase the social welfare, in each round of our algorithm, we try to allocate an edge (i.e., two items) to the agent $i$ with the smallest value so that the social welfare can increase by 1. 
However, we need to be very careful by allocating two items which may break the EF1 requirement even if $i$ is not envied by the others.
If allocating an edge $e$ to $i$ makes some agents $j$ envy $i$ for more than one item, we check whether $i$ can maintain her utility by selecting a bundle from unallocated items.
If so, we execute {\em exchange} procedure by asking $j$ to (properly) select a bundle from $X_i$ and $i$ to (properly) select a bundle from unallocated items so that the social welfare is increased by 1.
All the items in $X_i$ and the items in $X_j$ that are not selected by $i$ are returned to the algorithm.
If not, we try to allocate an edge to the agent with the second smallest value by executing the above procedures, and so on.
The description is in Algorithm \ref{alg:EF1:n04091} and we have the following.
\begin{theorem}\label{ef1:binary:theorem}
For any instance $\cI=(G,N)$ with binary weights, Algorithm \ref{alg:EF1:n04091} returns an EF1 allocation in polynomial time with social welfare at least $1/3 \cdot \sw^*(\cI)$.
\end{theorem}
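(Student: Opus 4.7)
The plan is to establish three facts in turn: polynomial runtime, EF1 invariance of the output, and the $1/3$-approximation of social welfare.

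\textbf{Termination and runtime.} I would first observe that each pass of the while loop either (i) falls into Case 3 and increments $t$ by one, or (ii) executes Case 1 or Case 2 and increases $\sum_k u_k(X_k)$ by exactly one. For Case 1 this is immediate because a $w_i=1$ edge is added to $X_i$; for Case 2 it follows from $u_j(V_j^*)=u_j(X_j)+1$, $u_i(V^*)=u_i(X_i)$, and no other bundle being altered. Since the social welfare is bounded above by $|V|/2$ and $t$ can only take $|{\bf A}|\le n$ values before being reset, the loop runs $O(n|V|)$ rounds. Each round performs only matching computations, EF1 tests, and minimum-size-set searches (themselves reducible to matching in the binary-weight setting), and the concluding envy-cycle elimination is polynomial by~\cite{DBLP:conf/sigecom/LiptonMMS04}.

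\textbf{EF1 invariance.} The empty allocation is trivially EF1, and Case 1 checks EF1 explicitly. For Case 2 I would verify invariance by case analysis on a spectator $k\notin\{i,j\}$. The new $X_j^{new}=V_j^*\subseteq X_i^{old}$ is a subset of a previous bundle, so by monotonicity any witness vertex certifying $k$'s EF1 relation to $X_i^{old}$ still certifies $k$'s EF1 relation to $X_j^{new}$. The new $X_i^{new}=V^*$ is a minimum-size pool subset achieving $u_i(X_i)$, which under binary weights forces $|V^*|=2u_i(X_i)$ realized by a $w_i$-perfect matching; combined with the failure of Case~1 on those same vertices for every earlier group, this prevents any spectator's EF1 constraint from being broken. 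For the $(i,j)$ pair itself, $i$'s utility is unchanged while $j$'s strictly increases, so EF1 on the pair transfers from the previous round. The final envy-cycle elimination preserves EF1 throughout the distribution of the residual pool.

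\textbf{Social welfare bound.} Let $S=\sum_i u_i(X_i)$ and fix a welfare-maximizing allocation $(X_1^*,\ldots,X_n^*)$ together with per-agent matchings $M_i^*$, so $\sw^*=\sum_i|M_i^*|$. I would set up a charging map sending each $e=(u,v)\in M_{i(e)}^*$ to one matched edge in the algorithm's output and show each algorithm-matched edge absorbs at most three charges. At termination of the while loop, for every $i$ and every $e\in M_i^*$ either (a) $\{u,v\}\not\subseteq P$, so some previously chosen bundle already captured a vertex of $e$ and that vertex is incident to an algorithm-matched edge to which we charge $e$; or (b) $\{u,v\}\subseteq P$, in which case the failure of Case~1 on agent $i$ means that allocating $e$ to $i$ would break EF1, forcing the existence of a witness bundle carrying at least two algorithm-matched edges in agent $i$'s view, which we charge. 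A parallel argument handles failures of Case~2 for envied agents, and the envy-cycle elimination phase, which only adds vertices one at a time, cannot invalidate the accounting. Summing the charges gives $\sw^*\le 3S$.

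\textbf{Main obstacle.} The two delicate steps are (i) proving that Case~2 preserves EF1 on the $V^*$ side, which requires simultaneously exploiting the minimum-size choice of $V^*$ and the poorest-first group ordering that governs when Case~1 can have failed, and (ii) making the $3$-to-$1$ charging tight in the terminal state, particularly when the optimal allocation distributes the two endpoints of $e$ differently from the algorithm. Once these two structural lemmas are proved, the runtime and EF1 claims are bookkeeping, and the $1/3$ ratio follows by aggregating charges over all of $\sw^*$.
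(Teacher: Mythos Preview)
Your runtime accounting and the EF1-invariance sketch are broadly in line with the paper's Lemma~\ref{lemma;ef1;binary} (the paper handles the $V^*$ side by explicitly splitting spectators into groups $A_l$ with $l\ge t$ and $l<t$, using $|V^*|=2u_i(X_i)$ for the former and the failure of earlier rounds for the latter; your description gestures at both ingredients but is less precise).

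The genuine gap is in your social-welfare argument, specifically case~(b) of the charging scheme. When an optimal edge $e\in M_{i(e)}^*$ has both endpoints still in the pool $P$, you claim that the failure of Case~1 ``forces the existence of a witness bundle carrying at least two algorithm-matched edges in agent $i$'s view, which we charge.'' But the failure of Case~1 only tells you that adding $e$ to $X_{i(e)}$ would break EF1 for \emph{some} agent $j$; it does not hand you a specific algorithm-matched edge to receive the charge, and even if you pick one, you give no reason why the total number of case-(b) charges landing on any single algorithm edge is bounded. Many optimal edges could be entirely in $P$, all blocked by the same envying agent $j$, and your scheme would pile all of them onto whatever edge you associate with $j$.

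The paper sidesteps this entirely with a global argument rather than a local charging. The key termination property (Lemma~\ref{Lemma:L:R:compare}) is that after the \textbf{while} loop every agent satisfies $u_i(X_i)\ge u_i(V_R)$, which follows because both Case~1 and Case~2 failed for every group. Summing gives $|M_L|=\sum_i u_i(X_i)\ge |M_R|$, where $M_R$ is the social-welfare-maximizing matching on the residual pool $V_R$. Since every allocated vertex is matched in its owner's bundle, $|V_L|=2|M_L|$, so at most $2|M_L|$ edges of the global optimum $M^*$ can touch $V_L$ and at most $|M_R|$ can lie entirely in $V_R$; hence $\sw^*\le |M^*|\le 2|M_L|+|M_R|\le 3|M_L|$. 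This replaces your case-(b) witness hunt with the single inequality $|M_R|\le |M_L|$, which is exactly what your argument is missing.
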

Before proving Theorem \ref{ef1:binary:theorem}, we give some useful lemmas.
\begin{lemma}\label{lemma;ef1;binary}
During the execution of Algorithm \ref{alg:EF1:n04091}, the partial allocation maintains EF1.
\end{lemma}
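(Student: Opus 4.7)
The plan is to prove the invariant by induction on the iterations of Algorithm~\ref{alg:EF1:n04091}, with the initial empty allocation as the trivial base case. The key simplifying observation is that under binary weights the size of a maximum matching changes by at most one when a single vertex is added or removed, so EF1 becomes equivalent to the pairwise envy bound $u_p(X_p)\ge u_p(X_q)-1$ for every pair of agents $p,q$; this is what I would verify throughout.

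For the inductive step, Case~3 does nothing and Case~1 has the EF1 check built into its \textbf{if} guard, so only the Case~2 exchange needs work. Let $\bX$ denote the pre-exchange allocation and $\bX'$ the post-exchange one, so $X_i'=V^*$, $X_j'=V_j^*$, and $X_k'=X_k$ for $k\notin\{i,j\}$; the defining equalities give $u_i(V^*)=u_i(X_i)$ and $u_j(V_j^*)=u_j(X_j)+1$, while the inductive EF1 together with the fact that $j$ envies $i$ forces $u_j(X_i)=u_j(X_j)+1$. I would then check the six pairwise envy inequalities one by one. Four of them are immediate from the inductive hypothesis: envies between agents whose bundles did not change; the envies of $i$ and of any third agent towards the new $X_j'=V_j^*\subseteq X_i$, controlled by monotonicity together with the previous EF1; and $j$'s envy towards any third agent, which only weakens since $j$'s utility strictly grew.

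The two nontrivial inequalities are (a)~$u_j(V^*)\le u_j(X_j)+2$ and (b)~$u_p(V^*)\le u_p(X_p)+1$ for every third agent $p$, both bounding envy towards the newly formed bundle $V^*\subseteq P$. The plan is to exploit the minimality of $V^*$, so that $V^*$ consists of exactly $2\,u_i(X_i)$ vertices forming a maximum $w_i$-matching, combined with the fact that Case~1 was rejected in the current pass for every agent in $A_t$ and in particular for $i$. I would then argue by contradiction: if (a) or (b) failed, then $V^*$ would contain an edge whose allocation to an appropriate agent in $A_t$ would have been a valid Case~1 move, violating the failure of Case~1. Making this ``extraction of a Case~1 witness'' precise is the main obstacle, because the hypothetical violation is phrased in terms of $u_j$ or $u_p$ while Case~1 requires a $w_i$-edge together with EF1 preservation against \emph{every} agent; resolving this will likely require unpacking how the group structure $\mathbf{A}$ interacts with the failure of Case~1 and invoking the envy-at-most-one reformulation of the inductive hypothesis to upgrade a local witness into a globally EF1-safe move. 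After the while loop terminates, the envy-cycle elimination step preserves EF1 by the classical argument of \cite{DBLP:conf/sigecom/LiptonMMS04}, closing the induction.
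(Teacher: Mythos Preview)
Your inductive scaffold and the four ``easy'' pairwise checks are correct, and you have isolated the real crux: bounding every agent's envy towards the freshly formed bundle $V^*$. But the plan you outline for (a) and (b) is aimed at the wrong target, and the obstacle you flag is real---it does not resolve the way you hope.

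There are two missing ideas. First, you record that minimality gives $|V^*|=2\,u_i(X_i)$ but never cash it in. The paper uses it directly: for any agent $p\in A_\ell$ with $\ell\ge t$ one has
\[
u_p(V^*)\ \le\ \tfrac{1}{2}|V^*|\ =\ u_i(X_i)\ \le\ u_p(X_p),
\]
so such agents do not envy $V^*$ at all. This is a one-line counting argument, and it removes the need for any Case~1 witness for these agents.

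Second, for an agent $p\in A_\ell$ with $\ell<t$, your plan to exhibit a Case~1 move \emph{for some agent in $A_t$} cannot work: a violation of (a) or (b) produces a $w_p$-edge inside $V^*$, not a $w_{i'}$-edge for any $i'\in A_t$, and anyway Case~1 already failed for every agent in $A_t$, so ``finding'' such an edge yields no contradiction. The paper looks \emph{backwards} instead. Because the algorithm advanced from round $\ell$ to round $t$ only through Case~3 skips, the pool $P$ has not changed since round $\ell$. Hence if $u_p(V^*)>u_p(X_p)$, then already at round $\ell$ agent $p$ had (i) a $w_p$-edge in $G[P]$ and (ii) a subset of $P$ of $p$-value $u_p(X_p)$. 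If (i) was EF1-safe, Case~1 fires for $p$ at round $\ell$; otherwise some agent envies $p$, and together with (ii) this makes Case~2 fire for $p$. Either way the algorithm never reaches round $t$, a contradiction. Note that the fallback to Case~2 is essential---a Case~1 witness alone is not enough---and the witness is for $p\in A_\ell$, not for anyone in $A_t$.
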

\begin{proof}
In the execution of Algorithm \ref{alg:EF1:n04091}, two cases within the {\bf while} loop can change the partial allocation:
\begin{itemize}
    \item Case 1. Directly Allocate;
    \item Case 2. Exchange and Allocate.
\end{itemize}
Consider an arbitrary round $t \ge 1$.
In Case 1, a single edge is allocated to one agent if and only if such allocation still guarantees EF1. 
Now, we consider Case 2. If $i \in A_t$ is the agent who 
is able to pick a subset $V^* \subseteq P$ to maintain his own utility, i.e., $u_i(V^*)=u_i(X_i)$, 
we show that any other agent does not envy $i$ for more than one item after agent $i$ receives bundle $V^*$. 
Let $M^*_i$ be the maximum matching of $G[V^*]$ for agent $i$.
We first consider agent  $j^* \in A_{l}, l \in [t, \tau]$. Note that replacing $X_i$ by $V^*$ does not change the number of edges in the maximum matching $M_i$ as well as the size of $i$'s bundle $X_i$. Thus, we have
\begin{equation}\nonumber
    u_{j^*}(X_{j^*})=|M_{j^*}| \geq |M_i|=|M^*_i|=\frac{|V^*|}{2}\geq  u_{j^*}(V^*),
\end{equation}
where the last inequality holds because for binary valuations, the valuation of a bundle for one agent is at most half the size of the bundle. Therefore, agent $j^* \in A_{l}, l \in [t, \tau]$ does not envy agent $i$ up to more than one item after $i$ replaces its bundle with $V^*$. 
Next, we consider agent $j^* \in A_{l}, l \in [t-1]$. For the sake of contradiction, assume $u_{j^*}(X_{j^*})<u_{j^*}(V^*)$, which means that there exists at least one edge $e$ such that $w_{j^*}(e)=1$ as well as a bundle $V_{j^*} \subseteq P$ such that $u_{j^*}(V_{j^*})=u_{j^*}(X_{j^*})$. Therefore, in the $l$th round of the {\bf while} loop, a single edge $e$ with weight $1$ is added to agent $j^*$ if it does not break EF1. Otherwise, there exists an agent $j'$ who envies agent $j^*$ before adding edge $e$. In such case, Algorithm \ref{alg:EF1:n04091} will execute the bundle-exchanging procedure in Step \ref{exchanging:from:ef1}-\ref{exchanging:to:ef1} in $l<t$th round of the {\bf while} loop, which is a contradiction with the $t$th round of the {\bf while} loop being executed. We complete the proof of Lemma \ref{lemma;ef1;binary}.
\end{proof}

For any graph $G=(V,E)$ and $n$ different binary valuations $v_i(\cdot)$ on $G$, 
we call a matching $M$ social welfare maximizing if $M$ is a maximum matching on the graph $G'=(V,E')$ where for any $e\in E$ if and only there exists $i$ such that $v_i(e) = 1$.
Let $M^*$ denote the social welfare maximizing matching on the input graph $G$ of Algorithm \ref{alg:EF1:n04091}. 
Let $V_R$ be the set of unallocated items after we move out of the {\bf while} loop in Step \ref{Alg:end_while_1}, and
$M_R$ be the  social welfare maximizing matching on the induced subgraph of $V_R$.
Let $V_L$ be the set of allocated items after Step \ref{Alg:end_while_1} and $M_L$ be the welfare maximizing matching on $V_L$.
Actually, $|M_L|$ is the social welfare that Algorithm \ref{alg:EF1:n04091} produces after Step \ref{Alg:end_while_1}.
Then we have the following.
\begin{lemma}\label{Lemma:L:R:compare}
$|M_L| \geq |M_R|$.
\end{lemma}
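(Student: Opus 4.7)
The plan is to exploit the termination conditions of the \textbf{while} loop to bound $|M_R|$ agent-by-agent and then sum. First I would establish the structural invariant: at every point in Algorithm~\ref{alg:EF1:n04091}, each bundle $X_i$ is a vertex-disjoint union of edges with $w_i(e)=1$. This is preserved by both Case~1 (which adds the two endpoints of a $w_i$-valued edge) and Case~2 (which replaces $X_i$ by $V^\ast$ and $X_j$ by $V^\ast_j$, both of which are chosen as minimum-size sets achieving the target utilities and hence are disjoint unions of $w_i$-edges and $w_j$-edges respectively). Consequently $u_i(X_i)=|X_i|/2$, and since the edges certifying each $u_i(X_i)$ are vertex-disjoint across agents, their union is a welfare-maximizing matching on $V_L$. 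This forces $|M_L|=|V_L|/2=\sum_{i\in N}u_i(X_i)$.

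The heart of the proof is the inequality $u_i(V_R)\le u_i(X_i)$ for every agent $i$, which I would prove by splitting on whether $i$ is envied at termination. If some $j$ envies $i$, then the first clause of Case~2 is met, so termination forces the failure of the second clause: no $V^\ast\subseteq P=V_R$ satisfies $u_i(V^\ast)=u_i(X_i)$, which (since $P=V_R$ and $u_i$ is monotone over subsets in the sense of maximum matchings) means exactly that $u_i(V_R)<u_i(X_i)$. If $i$ is not envied by anyone, I would argue that termination of Case~1 forces $u_i(V_R)=0$: intuitively, whenever $i$ has a $w_i$-valued edge available in $G[V_R]$, allocating its two endpoints to $i$ should preserve EF1 because no agent $j$ envied $i$ beforehand and $X_i$ is already saturated into $w_i$-edges.

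Combining these two bounds with a standard decomposition finishes the proof. Given $M_R$, assign each edge $e\in M_R$ to a single agent $i_e$ with $w_{i_e}(e)=1$ (such an agent exists because $M_R$ lives in the welfare graph); write $M_R^i=\{e\in M_R:i_e=i\}$. Each $M_R^i$ is a matching in the $w_i$-value-graph restricted to $V_R$, so $|M_R^i|\le u_i(V_R)$. Therefore
\[
|M_R|=\sum_{i\in N}|M_R^i|\;\le\;\sum_{i\in N}u_i(V_R)\;\le\;\sum_{i\in N}u_i(X_i)\;=\;|M_L|,
\]
which is exactly the claim.

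The step I expect to be the main obstacle is the non-envied case of the agent-wise bound. The naive matching-theoretic argument (``adding two vertices raises $j$'s maximum matching by at most~$1$, so EF1 is preserved'') is not quite true: two fresh vertices can contribute two distinct augmenting edges to $j$'s graph on $X_i$, and if $u_j(X_j)=u_j(X_i)$ this can indeed break EF1. Handling this cleanly requires either exploiting the invariant that $X_i$ is itself a union of $w_i$-edges (so $u_j(X_i)\le u_i(X_i)$) together with a careful choice of witness vertex $g$, or observing that whenever two augmenting paths appear simultaneously one of them must traverse an edge that can be ``charged'' back to an edge of $M_L$. I would formalize this by case analysis on the number of $w_j$-valued edges between $\{v_1,v_2\}$ and $X_i$, showing that in each case either EF1 is in fact preserved (contradicting termination) or $i$ was already envied before the would-be allocation (contradicting the non-envied hypothesis).
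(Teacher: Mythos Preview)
Your skeleton is exactly the paper's: both argue
\[
|M_L|\;=\;\sum_{i\in N}u_i(X_i)\;\ge\;\sum_{i\in N}u_i(V_R)\;\ge\;|M_R|,
\]
with the middle inequality coming from the per-agent bound $u_i(X_i)\ge u_i(V_R)$. The paper's proof simply asserts this bound in one sentence (``when the algorithm moves out of the \textbf{while} loop, any agent $i$ values the unallocated items no more than its own bundle'') without separating cases. Your structural invariant $|X_i|=2u_i(X_i)$, your treatment of the envied case via the failure of Case~2, and your decomposition of $M_R$ are all correct and are implicitly what the paper relies on.

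You are right to flag the non-envied case as the obstacle, and your proposed dichotomy (``either EF1 is preserved or $i$ was already envied'') does not close it. Take two agents and vertices $\{a,b,v_1,v_2,v_3,v_4\}$ with
\[
w_1(a,b)=w_1(v_1,v_2)=w_1(v_3,v_4)=1,\qquad w_2(a,v_1)=w_2(b,v_2)=w_2(a,v_3)=w_2(b,v_4)=1,
\]
and all other weights~$0$. If the algorithm's first Case~1 move assigns $(a,b)$ to agent~$1$, then afterwards agent~$2$ has no $w_2$-edge inside $P=\{v_1,v_2,v_3,v_4\}$ and is not envied, so the loop advances to agent~$1$. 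But adding either $(v_1,v_2)$ or $(v_3,v_4)$ to $X_1=\{a,b\}$ breaks EF1 for agent~$2$: every three-element subset of $\{a,b,v_k,v_{k+1}\}$ contains a $w_2$-edge, while $u_2(X_2)=0$. Yet agent~$1$ is \emph{not} envied beforehand, since $u_2(\{a,b\})=0=u_2(X_2)$. The \textbf{while} loop therefore terminates with $u_1(X_1)=1<2=u_1(V_R)$ and in fact $|M_L|=1<2=|M_R|$. So neither branch of your case analysis fires, and the per-agent inequality that both you and the paper invoke is not a consequence of the termination condition alone; it is sensitive to which agent/edge the algorithm selects when several qualify in Case~1. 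The paper's one-line justification glosses over precisely this point, and your argument would need an additional ingredient---for instance, a specific tie-breaking rule in Case~1, or a weaker aggregate inequality that avoids the per-agent claim---to be complete.
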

\begin{proof}
 We note that when Algorithm \ref{alg:EF1:n04091} moves out of the {\bf while} loop in Step \ref{step:20220411}, any agent $i$ values the unallocated items no more than its own bundle, i.e., $u_i(X_i) \geq u_i(V_R)$. Then we have $|M_L|=\sum\limits_{i=1}^{n}u_i(X_i) \geq \sum\limits_{i=1}^{n}u_i(V_R) \geq |M_R|$, which completes the proof.
\end{proof}
Based on the claims and lemmas presented above, we are ready to prove Theorem \ref{ef1:binary:theorem}
\begin{proof}[Proof of Theorem \ref{ef1:binary:theorem}]
Let  $M_m$ be a welfare maximizing matching on the bipartite graph induced by $V_L$ and $V_R$.
 i.e., finding as many disjoint edges $e_{ij}$ as possible such that $v_i \in V_R$ and $v_j \in V_L$.
Observe that the maximum number of vertices within $V_L$ equals half the number of the edges in the maximum matching $M_L$, i.e., $|V_L|=2|M_L|$. Therefore, the size of $M_m$ is at most $2|M_L|$ (each vertex $v_1 \in V_L$ combined with another vertex $v_2 \in V_R$ to form a matching). Therefore, we have $|M^*| \leq 2|M_L|+|M_R|$. Furthermore, we have
\begin{align*}
    \frac{u(M_L)}{\sw^*} =\frac{|M_L|}{|M^*|}\geq \frac{|M_L|}{2|M_{L}|+|M_R|}  \geq \frac{|M_L|}{3|M_{L}|} \geq \frac{1}{3},
\end{align*}
where the second inequality holds because $|M_L| \geq |M_R|$ proved in Lemma \ref{Lemma:L:R:compare}. 
Since Step \ref{Alg:binary:envy-cycle} can only increase the social welfare, we have proved the social welfare guarantee.

It remains to see the running time of the algorithm. 
In each iteration of the {\bf while} loop in Step \ref{step:20220411}, the utility of exact one agent increases by $1$. Since the maximum possible welfare is bounded by $O(|V|^2)$, the {\bf while} loop will execute for at most $O(|V|^2)$ times. The envy-cycle elimination procedure in Step \ref{Alg:binary:envy-cycle} will execute at most $O(|V|)$ times. Thus, Algorithm \ref{alg:EF1:n04091} runs in $O(|V|^2+|V|)=O(|V|^2)$ time.
\paragraph{Tight Example.}
\begin{figure}[htb]
\begin{center}
\subfigure[Agent $1$'s Metric]{
\includegraphics[width=2.69in]{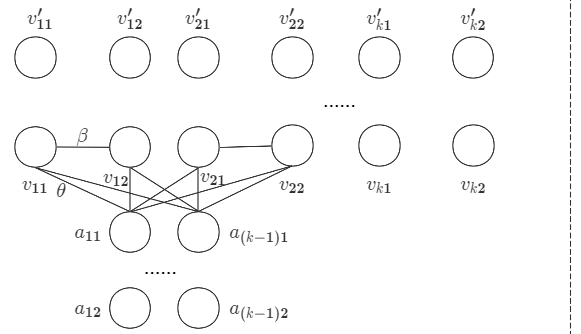}
\label{ef:binary1}
}
\subfigure[Agent $2$'s Metric]{
\includegraphics[width=2.5in]{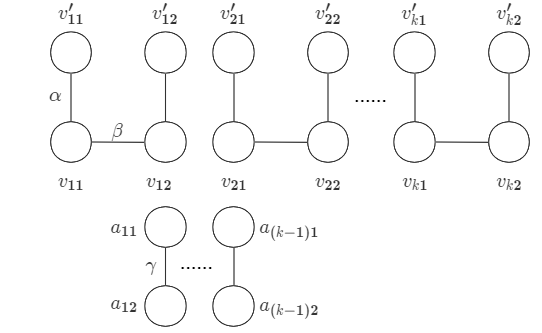}
\label{ef:binary2}
}
\caption{The graph is partitioned among two agents with binary valuations.}
\label{ef:binary3}
\end{center}
\end{figure}
We show that the analysis in Theorem \ref{ef1:binary:theorem} is asymptotically tight.
Consider the example as shown in Figure~\ref{ef:binary3}. Let $k>4$ be a constant. Denote by $\alpha_{ij}, i \in [k], j \in [2]$ the edge between node $v_{ij}$ and node $v'_{ij}$, $\beta_{i}, i \in [k]$ the edge between node $v_{i1}$ and $v_{i2}$, $\gamma_i, i \in [k-1]$ the edge between node $a_{i1}$ and $a_{i2}$. Let $\theta_{i1}, \theta_{i2},\theta_{i3},\theta_{i4},i \in [k]$ be the edge between node $a_{i1}$ and node $v_{11}$, $a_{i1}$ and node $v_{12}$, $a_{i1}$ and node $v_{21}$, $a_{i1}$ and node $v_{22}$, respectively.
Obviously, allocating all the nodes to agent $2$ and allocating nothing to agent $1$ result in the optimal social welfare, i.e.,
\begin{equation}
   \sw^*=u_2(V)=\sum\limits_{i \in [k], j \in [2]}w_2(\alpha_{ij})+\sum\limits_{i \in [k-1]}w_2(\gamma_i)=3k-1. 
\end{equation} 
The corresponding maximum matching $M^*$ contains $2k$ edges $\alpha_{ij}, i \in [k], j \in [2]$ and $k-1$ edges $\gamma_i, i \in [k-1]$. Now, we consider the worst case achieved by Algorithm \ref{alg:EF1:n04091} running on this example, which results in a total utility of $k+3$.
In the first two rounds of the {\bf while} loop in Step \ref{step:20220411}, each agent picks exactly one of the two edges $\beta_1$ and $\beta_2$ (w.l.o.g. agent 1 picks $\beta_1$ and agent 2 picks $\beta_2$). Following that agent $2$ picks all the remaining edges $\beta_i, i \in [3, k]$ and arbitrary two edges $\gamma_i, i \in [k-1]$ (w.l.o.g. $\gamma_1$ and $\gamma_2$). We then move out of the {\bf while} loop since (1) for agent 1, $u_1(P)=0$; (2) for agent 2, $u_2(P) < u_2(X_2)$ and allocating any other edge $\gamma_i, i \in [3, k-1]$ to it will break EF1. Thus, we execute the envy-cycle elimination procedure on the remaining items, i.e.,  allocating all the remaining vertices in $P$ to agent $1$ with the EF1 allocation being completed. For agent 2, the maximum matching in $G[X_2]$ containing edges $\beta_i, i \in [2, k]$, $\gamma_i, i \in [2]$. We thus have $u_2(X_2)=k-1+2=k+1$. For agent 1, the maximum matching in $G[X_1]$ contains edges $a_{11}$ and $a_{12}$. Therefore, $u_1(X_1)=2$. The total social welfare is $u_1(X_1)+u_2(X_2)=k+1+2=k+3$.
Thus
\[
\lim\limits_{k \rightarrow  +\infty}\frac{k+3}{3k-1} =\frac{1}{3},
\]
which completes the proof of the theorem.
\end{proof}
\subsection{Two Heterogeneous Agents}

We then discuss the case of two agents, and show that Algorithm \ref{alg:ef1:2} ensures at least 1/3 fraction of the optimal social welfare.
Intuitively, in Algorithm \ref{alg:ef1:2}, we first check whether there is a single edge $e$ for which some agent $i$ has value at least $1/3\cdot \sw^*(\cI)$.
If so, allocating $e$ to $i$ already ensures $1/3\cdot \sw^*(\cI)$.
Moreover, this partial allocation is EF1 since the removal of one item in $e$ results in no edges, and thus we can use the envy-cycle elimination algorithm to allocate the remaining vertices, which returns an EF1 allocation and can only increase the social welfare.
Otherwise, we compute a social welfare maximizing allocation $(M_1,M_2)$, i.e., $u_1(M_1)+u_2(M_2) = \sw^*(\cI)$.
Without loss of generality, assume $u_1(M_1) \leq u_2(M_2)$.
We temporarily allocate $M_i$ to agent $i$ for $i = 1,2$.
If the allocation is not EF1, since $u_1(M_1) \leq u_2(M_2)$, it can only be the case that agent 1 envies agent 2 but agent 2 does not envy agent 1. 
Then we move items in agent 2's bundle one by one to agent~1.
It can be shown that there must be a time after which the allocation is EF1, and the first time when the allocation becomes EF1, the resulting social welfare is at least $1/3\cdot \sw^*(\cI)$.
Interestingly, despite the simplicity of Algorithm \ref{alg:ef1:2}, we can show that no algorithm that has better than 1/3 approximation. Formally, we have the following theorem.
\begin{theorem}\label{Theorem:ef1:two-agents}
For any instance $\cI$ with two heterogeneous agents, Algorithm \ref{alg:ef1:2} returns an EF1 allocation with social welfare at least $1/3 \cdot \sw^*(\cI)$, which is optimal. 
\end{theorem}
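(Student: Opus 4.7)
The plan is to analyze the two branches of Algorithm \ref{alg:ef1:2} separately and then exhibit a matching lower-bound instance. In the easy branch, where some edge $e$ and agent $i$ satisfy $w_i(e) \ge \sw^*/3$, I would allocate both endpoints of $e$ to $i$ and verify that the partial allocation is EF1: the other agent has an empty bundle, and removing either endpoint of $e$ from $i$'s bundle leaves a single vertex valued at $0$ by the other agent, while $i$ does not envy the empty bundle. Invoking the envy-cycle elimination of \cite{DBLP:conf/sigecom/LiptonMMS04} on the remaining vertices preserves EF1 and is weakly monotone in every agent's utility (matching-induced utilities are monotone in the bundle), so the final social welfare is at least $u_i(\{e\}) \ge \sw^*/3$.

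For the hard branch, assume $w^*:=\max_{e,i} w_i(e) < \sw^*/3$. The algorithm begins with a welfare-maximizing partition $(M_1, M_2)$ with $u_1(M_1) \le u_2(M_2)$, giving initial welfare $\sw^*$. If $(M_1, M_2)$ is already EF1 we are done; otherwise the orientation forces agent 1 to be the sole EF1 violator, so the algorithm transfers vertices one by one from agent 2 to agent 1 until EF1 is first restored at some step $k^*$. Two structural facts drive the analysis: (i) one vertex transfer perturbs either agent's valuation of either bundle by at most $w^*$, since the maximum matching changes by at most one edge; and (ii) agent 1's envy towards agent 2 is monotonically nonincreasing in $k$ while agent 2's envy towards agent 1 is monotonically nondecreasing, which guarantees the existence of a first EF1 step $k^*$. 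Combining the two EF1 inequalities $u_1(X_1^{k^*}) \ge u_1(X_2^{k^*}) - w^*$ and $u_2(X_2^{k^*}) \ge u_2(X_1^{k^*}) - w^*$ with the submatching bounds $u_i(X_1^{k^*}) + u_i(X_2^{k^*}) \le u_i(V) \le \sw^*$ then delivers the target $\sw(X^{k^*}) \ge \sw^*/3$.

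The main technical obstacle, which I expect to be the crux of the proof, is making the welfare-loss bookkeeping precise: a naive per-step argument only bounds the loss by $w^*$ per transfer, and $k^*$ need not be small, so the analysis must simultaneously exploit the strict inequality $w^*<\sw^*/3$ and the exact envy-balance at the crossing point. In particular, I would argue that just before step $k^*$ agent 1's envy deficit has already dropped to at most $w^*$ (since EF1 becomes satisfied after one more transfer), which in turn implies that agent 2's bundle still retains most of her original matching weight, and hence the cumulative welfare loss through step $k^*$ is at most $2\sw^*/3$.

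Finally, to show $1/3$ is optimal, I would exhibit a small two-agent instance in which every EF1 allocation has welfare at most $(1/3+o(1))\cdot\sw^*$. A natural construction bundles a high-value substructure preferred by one agent with an auxiliary low-value substructure valued by the other, arranged so that any EF1 allocation is forced either to split the valuable substructure or to hand it to the agent who values it less, losing a factor of three in welfare. Combined with the algorithmic guarantee this proves the $1/3$ bound is tight.
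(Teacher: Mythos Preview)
Your treatment of the easy branch and the optimality instance is fine and matches the paper. The gap is in the hard branch, and it is the step you yourself flag as the crux.

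The two EF1 inequalities you write down, $u_1(X_1^{k^*}) \ge u_1(X_2^{k^*}) - w^*$ and $u_2(X_2^{k^*}) \ge u_2(X_1^{k^*}) - w^*$, together with the submatching bounds $u_i(X_1^{k^*}) + u_i(X_2^{k^*}) \le u_i(V) \le \sw^*$, do \emph{not} deliver $\sw(X^{k^*}) \ge \sw^*/3$. Adding the two EF1 inequalities gives $u_1(X_1^{k^*}) + u_2(X_2^{k^*}) \ge u_1(X_2^{k^*}) + u_2(X_1^{k^*}) - 2w^*$; the right-hand side is the welfare of the swapped allocation, which is only known to be \emph{at most} $\sw^*$, so the inequality points the wrong way. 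Your submatching bounds are likewise upper bounds and cannot be used to lower-bound welfare. Your fallback argument, that ``agent 1's envy deficit has dropped to at most $w^*$, which in turn implies agent 2's bundle still retains most of her original matching weight,'' conflates the two valuations: agent 1's envy deficit is measured entirely in $u_1$, while the welfare you need to protect is $u_2(X_2)$, and in the heterogeneous setting there is no bridge between them. A small $u_1$-deficit is fully compatible with $u_2(X_2^{k^*})$ having collapsed to zero (imagine the transferred vertices form an independent set in agent~1's graph but hit every matched edge in agent~2's).

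The paper closes this gap not by balancing envy inequalities but by \emph{refining the transfer rule}: it orders the edges of $M_2$ by decreasing $w_2$-weight and moves their endpoints one at a time. With that order, when the loop terminates at most one edge $e$ of $M_2$ is split between the agents, and its weight is sandwiched as $w_2(e) \le u_2(X_2') \le u_2(X_2)$ (where $X_2'$ is what was handed to agent~1), giving $u_2(X_2) \ge \tfrac{1}{3}u_2(M_2)$ directly in agent~2's own metric. Combining this with $u_1(X_1) \ge u_1(M_1)$ and $u_1(M_1)+u_2(M_2)=\sw^*$ yields the $\tfrac{1}{3}\sw^*$ bound. Your argument, as written, does not supply a lower bound on $u_2(X_2^{k^*})$, and without one the proof cannot be completed.
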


\begin{algorithm}[!htb]
  \caption{EF1 Allocation with tight social welfare guarantee for 2 Heterogeneous Agents
    \label{alg:ef1:2}}
  \begin{algorithmic}[1]
    \REQUIRE Instance $\cI=(G,N,\w)$ with $G=(V, E)$.
  \ENSURE Allocation $\bX=(X_1,  X_2)$.
  \IF{there is $e\in E$ such that $w_i(e) \ge 1/3\cdot \sw^*(\cI)$ for some $i = 1,2$}
  \STATE Assign $e$ to agent $i$ and run envy-cycle elimination  algorithm for the vertices.
  \ELSE 
  \STATE Computing a social welfare maximizing allocation $(M_1, M_2)$. Without loss of generality, assume $u_1(M_1) \leq u_2(M_2)$, and assign $M_i$ to agent $i$ for $i = 1,2$.
  
  \WHILE{agent $1$ envies agent $2$ for more than one item}
          \STATE Reallocate some item $v\in X_2$ to agent $1$, i.e., $X_2 \leftarrow X_2 \setminus \{v\}$ and $X_1 \leftarrow X_1\cup \{v\}$.
  \ENDWHILE
  \ENDIF
       
  \STATE Return the allocation $(X_1, \cdots, X_n)$. 
  \end{algorithmic}
\end{algorithm}
\begin{proof}

Denote $(M_1, M_2)$ as a social welfare maximizing allocation. Consider the following two cases:
\begin{itemize}
    \item Case 1: $\exists e \in E$  such that $w_i(e) \geq \frac{1}{3} \cdot \sw^*, i \in \{1, 2\}$;
    \item Case 2: $\forall e \in E$, $w_i(e) < \frac{1}{3}\cdot \sw^*, i \in \{1, 2\}$.
\end{itemize}
For Case 1, giving edge $e$ to agent $i$ and running the envy-cycle elimination procedure on remaining vertices can find an EF1 allocation, which, at the same time, guarantees the total utility no less than $1/3$ of the maximum possible social welfare. 

Next, we consider Case 2. There are two subcases. 
\begin{itemize}
    \item Subcase 1: $u_i(M_i) \geq \frac{1}{3}\cdot \sw^*$ for all $i \in \{1, 2\}$;
     \item Subcase 2: $\exists i \in \{1, 2\}$ such that $u_i(M_i) < \frac{1}{3}\cdot \sw^*$.
\end{itemize}
For Subcase $1$, if such allocation guarantees EF1, the theorem holds. Otherwise, agent $1$ envies agent $2$ since we assume $u_1(M_1) \leq u_2(M_2)$. We then reallocate the item $v \in X_2$ to agent 1 one by one until such allocation guarantees EF1. The total utility is $u_1(X_1)+u_2(X_2) \geq u_1(M_1) \geq (1/3)\sw^*$. Therefore, we complete the proof for this subcase. 

Consider Subcase $2$. Without loss of generality, assume $u_1(M_1) < (1/3)\sw^*$.
If allocation $(M_1, M_2)$ guarantees EF1, the theorem is proved. Otherwise, by the assumption that $u_1(M_1) \leq u_2(M_2)$, agent $1$ envies agent $2$ more than one item. By $u_1(M_1) < (1/3)\sw^*$, we have $u_2(M_2)>(2/3)\sw^*$. Now, we consider to remove items from agent $2$'s bundle to agent $1$'s bundle. First, we sort the edges within $M_2$ by decreasing order according to their valuation to agent $2$. In each iteration, we pick an edge within agent $2$'s bundle with largest weight and give one endpoint to agent $1$. If the allocation still admits EF1, we give another endpoint to agent $1$ and pick another edge with the largest weight in agent $2$'s remaining bundle. Repeat the above procedure until agent $1$ envies agent $2$ up to exactly one item. When Algorithm \ref{alg:ef1:2} completes, at most one edge $e$ within $M_2$ is destroyed, i.e., one endpoint of $e$ is allocated to agent $1$ and the other endpoint still remains in $X_2$. If $e$ is the edge with the largest weight in $M_2$, we have $u_2(X_2)\geq u_2(M_2 \setminus\{e\})>(1/3)\sw^*$, where the last inequality holds because $u_2(M_2)>(2/3)\sw^*$ and $w_2(e)<(1/3)\sw^*$. We thus complete the proof of the theorem.
Otherwise, we next show that $u_2(X_2) \geq (1/3)u_2(M_2)$. Let $X'_2$ be the set of items given to agent $1$. We have 
\begin{equation}
   w_2(e) \leq u_2(X'_2) \leq u_2(X_2),
  \end{equation}
 where the first inequality holds because at least one edge within $M_2$ with larger weight is allocated to agent $1$ before and the second inequality holds since otherwise agent $2$ will envy agent $1$. We thus derive \begin{equation}
    u_2(X_2)\geq \frac{1}{3}(u_2(X_2)+u_2(X'_2)+w_2(e))\geq \frac{1}{3}u_2(M_2).
\end{equation}
Furthermore
\begin{equation}
\begin{aligned}
  u_1(X_1)+u_2(X_2) &\geq u_1(M_1)+\frac{1}{3}u_2(M_2)\\&\geq \sw^*-u_2(M_2)+\frac{1}{3}u_2(M_2)\\&=\sw^*-\frac{2}{3}u_2(M_2)\geq \frac{1}{3}\sw^*,
\end{aligned}
\end{equation}
where the last inequality holds because  $u_2(M_2) \leq \sw^*$. Since in each iteration, at most one item is removed from agent $2$ to agent $1$, Algorithm \ref{alg:ef1:2} runs in $poly(|V|)$ time. We complete the proof of the theorem.

\paragraph{Tight Example}
\label{subsec:app:2agents:tight}

We next show the approximation of $1/3$ is optimal. Consider the example in Fig. \ref{fig1202203041} and Fig. \ref{fig220203042}. 
It is not hard to verify that the maximum social welfare without fairness constraint is $\sw^*=3$ by allocating all the items to agent 1.
However, for any allocation where agent 1 has utility no smaller than 2, the allocation is not EF1 to agent 2 since agent 2 always has utility 0 in such allocations. 
Therefore, the maximum social welfare generated by EF1 allocations is no greater than $1+2\epsilon$.
Thus
\begin{equation}
    \lim_{x \to 0}\frac{1+2\epsilon}{3}=\frac{1}{3},
\end{equation}
which means the approximation ratio of $1/3$ is optimal.
\end{proof}
\begin{figure*}
\centering
\subfigure[Agent $1$'s weight for the graph]{
\includegraphics[width=1.6in]{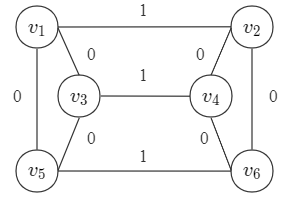}
\label{fig1202203041}
}
\hspace{.8in}
\subfigure[Agent $2$'s weight for the graph]{
\includegraphics[width=1.8in]{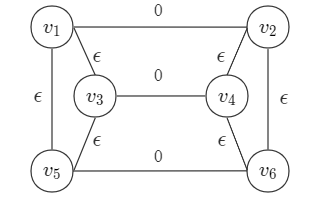}
\label{fig220203042}
}
\vspace{-0.1in}
\caption{An example where any EF1 allocation guarantees at most $(1/3+2\epsilon)$ of the maximum social welfare.}
\label{fig202108250}
\end{figure*}

\subsection{Homogeneous Agents}\label{envyfree}
\begin{algorithm}[htb]
  \caption{Computing EF1 Allocations with High Social Welfare for $n$ Homogeneous Agents
    \label{alg:EF1:n}}
  \begin{algorithmic}[1]
  \REQUIRE Instance $\cI=(G,N)$ with $G=(V, E; w)$.
  \ENSURE Allocation $\bX=(X_1, \cdots, X_n)$.
    \STATE Find a maximum matching $M^*$ in $G$.
   Denote by $V'$ the set of unmatched vertices by $M^*$.
   \STATE Find the greedy partition $(M_1,\cdots,M_n)$ of edges in $M^*$ such that $w(M_1) \ge \cdots \ge w(M_n)$.
   \STATE Set $X_i = V(M_i)$ for $i =1, \cdots, n$.
   \IF{$|M^*| \le n$}
        \STATE  Let $X_n=V(M_n)\cup V'$.
        \STATE \label{step:ef1:n:1} Return $(X_1,\cdots,X_n)$.
    \ENDIF
    \STATE Construct the EF1-graph $\cG=(N, \cE)$ based on  $(X_1,\cdots,X_n)$.
    \STATE Set $Q$ be the agents with positive in-degree. 
    \FOR{$i \in Q$} \label{step:EF1:n:for1}
        \STATE Let $e_i=(v_{i1}, v_{i2})$ be the last edge added to $M_i$ in the greedy-partition procedure. \label{step:EF1:n:for1:e}
        \STATE $X_i = X_i \setminus \{v_{i1}\}$ and $V' = V' \cup \{v_{i1}\}$.
     \ENDFOR
    \FOR{$v\in V'$} \label{step:EF1:n:for2}
        \STATE Let $i=\arg\min_{i\in N} u(X_i)$.
        \STATE Set $X_i = X_i \cup \{v\}$.
    \ENDFOR
    \STATE \label{step:ef1:n:2} Return $(X_1,\cdots,X_n)$.
   
  \end{algorithmic}
\end{algorithm}
\begin{theorem}\label{thm:ef1:n}
For any homogeneous instance $\cI$, 
Algorithm~\ref{alg:EF1:n} returns an EF1 allocation with social welfare at least $(2/3+2/(9n-3))\cdot \sw^*(\cI)$  in polynomial time.
\end{theorem}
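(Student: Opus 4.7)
The plan is to argue separately that the output is EF1 and that its social welfare is at least $\frac{2n}{3n-1}\cdot \sw^*(\cI)$ (which equals $\frac{2}{3}+\frac{2}{9n-3}$). The single most useful tool is the following \emph{max-matching swap property}: since $M^*$ is a maximum matching of $G$, any matching contained in $G[V(M_i)]$ has weight at most $w(M_i)$, because otherwise we could replace $M_i$ inside $M^*$ and beat $w(M^*)$. In particular $u(V(M_i))=w(M_i)$ for every $i$, and the same reasoning shows that the set $V'$ of vertices initially unmatched by $M^*$ is independent in $G$.

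For EF1, Case~1 ($|M^*|\le n$) is easy: every non-empty $M_i$ contains at most one edge, the swap argument applied to $X_n=V(M_n)\cup V'$ yields $u(X_n)\le w(M_n)\le u(X_i)$ for every $i$ with non-empty $M_i$, and removing one endpoint of the unique edge in any envied bundle drives its utility to $0$. For Case~2 ($|M^*|>n$), first note that any $i\in Q$ must have $|M_i|\ge 2$ (otherwise removing either endpoint of its single edge gives utility $0$, contradicting EF1-envy). Because $e_i$ is the last edge added to $M_i$ in the greedy partition it is the lightest edge of $M_i$, and the greedy balance gives $w(M_i)-w(e_i)\le w(M_n)$. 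After removing $v_{i1}$, the swap property applied to $V(M_i)\setminus\{v_{i1}\}$ (using $v=v_{i2}$ as the witness in the EF1 test) shows $u(X_j)\ge w(M_n)\ge u((X_i\setminus\{v_{i1}\})\setminus\{v_{i2}\})$ for every $j$, killing every EF1-envy aimed at~$i$. For $i\notin Q$ the bundle is unchanged. Distributing $V'$ one vertex at a time to the currently minimum-utility bundle then preserves EF1 by the standard single-vertex argument.

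For social welfare, Case~1 gives $\sum_i u(X_i)=\sum_i w(M_i)=w(M^*)=\sw^*$ directly. In Case~2, each final bundle still contains $M_i\setminus\{e_i\}$ (if $i\in Q$) or $M_i$ (if $i\notin Q$) as a valid sub-matching, so
\[
\sw(\text{output})\ \ge\ \sw^*-\sum_{i\in Q}w(e_i),
\]
and it suffices to show $\sum_{i\in Q}w(e_i)\le\frac{n-1}{3n-1}\sw^*$. I would use three ingredients: (i) for $i\in Q$, $|M_i|\ge 2$ and greedy's decreasing-weight order give $w(e_i)\le w(M_i)/|M_i|\le w(M_i)/2$; (ii) combining (i) with the greedy balance $w(M_i)-w(e_i)\le w(M_n)$ yields $w(M_i)\le 2w(M_n)$ and hence $w(e_i)\le w(M_n)$; (iii) $n\notin Q$ because $u(X_j)=w(M_j)\ge w(M_n)=u(X_n)$ already rules out EF1-envy towards $n$, so $|Q|\le n-1$. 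Thus $\sum_{i\in Q}w(e_i)\le(n-1)w(M_n)$, while $\sw^*\ge w(M_1)+(n-1)w(M_n)$; a case split on whether $w(M_1)$ dominates, together with the extra utility the redistribution step deposits into the minimum-utility bundle, is then used to extract the $\frac{2n}{3n-1}$ ratio.

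The hardest part will be the last step: extracting the precise constant $\frac{2}{3}+\frac{2}{9n-3}$ rather than just $\frac{2}{3}$ requires carefully pairing the upper bound $(n-1)w(M_n)$ on the loss against the lower bound on $\sw^*$ contributed by $w(M_1)$, and crediting the marginal utility gained when redistributed vertices (the original unmatched ones plus the removed $v_{i1}$'s) form new edges inside their receiving bundles. The polynomial running time is routine: one maximum-matching call, one greedy partition, at most $n$ removals, and an envy-cycle-style distribution give an $O(\mathrm{poly}(|V|))$ algorithm.
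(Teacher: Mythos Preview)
Your EF1 argument is essentially the paper's (one small point: the inequality $u(X_j)\ge w(M_n)$ for $j\in Q$ is true but needs the one-line justification that $j\in Q$ \emph{means} $u(V(M_j)\setminus\{v\})>w(M_n)$ for every $v$, in particular $v=v_{j1}$). The real gap is in the social-welfare part. From only $|M_i|\ge 2$ for $i\in Q$ you obtain $w(e_i)\le w(M_i)/2$, hence $w(M_i)\le 2w(M_n)$, and your own loss estimate then gives at best
\[
\frac{\sw(\text{output})}{\sw^*}\ \ge\ 1-\frac{1}{2}\cdot\frac{\sum_{i\in Q}w(M_i)}{\sw^*}\ \ge\ 1-\frac{1}{2}\cdot\frac{2(n-1)w(M_n)}{2(n-1)w(M_n)+w(M_n)}\ =\ \frac{n}{2n-1},
\]
which is $\tfrac12+\tfrac{1}{4n-2}$, strictly below $\tfrac{2n}{3n-1}$ for every $n\ge 2$. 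The escape hatch you propose, ``crediting the marginal utility gained when redistributed vertices form new edges inside their receiving bundles,'' does not help: in the paper's own tight instance the removed $v_{i1}$'s and the unmatched vertices contribute essentially zero when redistributed, so no extra credit is available.

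What you are missing is the stronger structural fact $|M_i|\ge 3$ for every $i\in Q$ (the paper's Lemma~\ref{lem:ef1:n:mig3}), which does not follow from the greedy balance alone. The argument is: if $M_i=\{e,e'\}$ with $w(e)\ge w(e')$, then greedy balance forces $w(e)\le w(M_n)$; since $M^*$ is a maximum matching, any edge of $G[V(M_i)]$ heavier than $w(e)$ must share a common vertex $\bar v$ (else two of them would beat $\{e,e'\}$), and deleting $\bar v$ leaves three vertices whose remaining edges all weigh at most $w(e)\le w(M_n)$, contradicting that agent~$n$ EF1-envies $i$. Once you have $|M_i|\ge 3$, you get $w(e_i)\le w(M_i)/3$ and hence $w(M_i)\le\tfrac32 w(M_n)$ for $i\in Q$; plugging this into the very same loss calculation yields
\[
\frac{\sw(\text{output})}{\sw^*}\ \ge\ 1-\frac{1}{3}\cdot\frac{\tfrac32(n-1)w(M_n)}{\tfrac32(n-1)w(M_n)+w(M_n)}\ =\ \frac{2n}{3n-1},
\]
with no case split on $w(M_1)$ and no appeal to the redistribution step needed.
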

In the following, we  first briefly discuss the idea of Algorithm~\ref{alg:EF1:n}.
We introduce the  {\em EF1-graph}, inspired by the envy-graph introduced in \citep{DBLP:conf/sigecom/LiptonMMS04}.
Given a (partial) allocation $(X_1,\cdots,X_n)$, we construct the corresponding EF1-graph $\cG = (N, \cE)$, where the nodes are agents (and thus are used interchangeably) and there is a directed edge from $i$ to $j$ if $i$ envies $j$ (or $X_j$) for more than one item, 
\[
u_i(X_i) < u_i(X_j \setminus \{v\}) \text{ for every $v \in X_j$}.
\]
When the agents have identical utility functions, we have the following simple observation. 
\begin{observation}\label{ob:ef1:graph}
The EF1-graph is acyclic;
The in-degree of the agent with smallest utility is zero.
\end{observation}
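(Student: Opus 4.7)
The plan is to prove the two claims separately, with both following from a single structural fact: under identical valuations, any directed edge $i\to j$ in the EF1-graph forces the strict inequality $u(X_i)<u(X_j)$. Once this is established, acyclicity and the in-degree claim are immediate.

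First, I would record a monotonicity observation about the utility function: for any $X\subseteq V$ and any $v\in X$, we have $u(X\setminus\{v\})\le u(X)$. This is because $G[X\setminus\{v\}]$ is an induced subgraph of $G[X]$ (deleting $v$ removes $v$ and its incident edges), so every matching in $G[X\setminus\{v\}]$ is also a matching in $G[X]$, and hence the maximum matching weight cannot increase.

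Next, I would combine this with the definition of the EF1-graph. Suppose $i\to j$ is an edge, i.e., $u(X_i)<u(X_j\setminus\{v\})$ for every $v\in X_j$. Picking any $v^*\in X_j$ (the case $X_j=\emptyset$ cannot occur since then $u(X_j\setminus\{v\})$ is vacuous and no envy arises), monotonicity gives $u(X_j\setminus\{v^*\})\le u(X_j)$, so $u(X_i)<u(X_j)$. Now the two claims follow:
\begin{itemize}
\item For acyclicity, assume toward a contradiction that there is a directed cycle $i_1\to i_2\to\cdots\to i_k\to i_1$ in $\cG$. Applying the inequality above along each edge yields $u(X_{i_1})<u(X_{i_2})<\cdots<u(X_{i_k})<u(X_{i_1})$, which is impossible.
\item For the in-degree claim, let $i^*\in\arg\min_{i\in N}u(X_i)$ and suppose some agent $j$ has an edge $j\to i^*$ in $\cG$. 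Then $u(X_j)<u(X_{i^*})$, contradicting the minimality of $u(X_{i^*})$. Hence no such $j$ exists, so $i^*$ has in-degree zero.
\end{itemize}

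There is no real obstacle here: the whole argument rests on monotonicity of the matching-weight utility under vertex deletion, which is purely structural. The only subtlety worth flagging is the edge case where some $X_j$ is empty; then the defining condition for an edge into $j$ becomes vacuous and must be handled by convention, but it does not affect the conclusion since an empty bundle yields $u(X_j)=0$ and cannot be envied up to one item by anyone with nonnegative utility.
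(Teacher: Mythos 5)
Your proof is correct: the key step, that an edge $i\to j$ forces $u(X_i)<u(X_j)$ via monotonicity of the maximum-matching utility under vertex deletion, is exactly the (omitted) justification the paper relies on, since the paper states this observation without proof. Your handling of the empty-bundle edge case and the resulting acyclicity and in-degree arguments are all sound.
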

Similar with Algorithm \ref{alg20210091}, in Algorithm \ref{alg:EF1:n}, we first compute a maximum weighted matching $M^*$ and let the corresponding unmatched vertices be $V'$.
If $|M^*| \le n$, by allocating each edge in $M^*$ to a different agent and $V'$ to one agent who has the smallest utility is EF1, since by removing a vertex from an edge, the remaining subgraph does not have edges any more. 
If $|M^*| > n$, we find a greedy-partition $\Gamma(M^*)=(M_1,\cdots,M_n)$ of $M^*$ such that $w(M_1) \ge \cdots \ge w(M_n)$.
However, by simply assigning $X_i = V(M_i)$ for every $i$, it may not be EF1, which is illustrated in the appendix.

To overcome this difficulty, we utilize the EF1-graph $\cG = (N,\cE)$ on the partial allocation $(V(M_1),  \cdots, V(M_n))$.
Let $Q\subseteq N$ be the set of agents who have positive in-degree, i.e., are envied by some agent for more than one item. 
By Observation \ref{ob:ef1:graph}, if $\cG$ is nonempty, $Q\neq \emptyset$ and $n \notin Q$.
Moreover, since $M_n$ has the smallest weight in the greedy partition $\Gamma(M^*)$, $n$ has an edge to every agent in $Q$.
We first consider the partial allocation after the {\bf for} loop in Step \ref{step:EF1:n:for1}, which is denoted by $Y=(Y_1,\cdots,Y_n)$.
We can prove that $Y$ is EF1, and moreover, it ensures the desired social welfare guarantee. 
Finally, the remaining steps preserve the EF1ness and can only increase the social welfare of the allocation.
Before proving Theorem \ref{thm:ef1:n}, we first show several technical lemmas.

\begin{lemma}
\label{lem:ef1:n:ef1}
$Y$ is EF1.
\end{lemma}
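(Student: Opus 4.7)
The plan is to show that for every pair of agents $i, j$, there exists $g \in Y_j$ with $u(Y_i) \ge u(Y_j \setminus \{g\})$, by reducing all four cases (according to whether $i \in Q$ and whether $j \in Q$) to a uniform bound: first produce a $g$ for which $u(Y_j \setminus \{g\}) \le w(M_n)$, and then argue $u(Y_i) \ge w(M_n)$, with strict inequality whenever $i \in Q$.

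The argument rests on three structural observations. First, because $M^* = M_1 \cup \cdots \cup M_n$ is a maximum weighted matching in $G$, a standard alternating-path exchange shows that each $M_k$ is itself a maximum matching in $G[V(M_k)]$, so $u(V(M_k)) = w(M_k)$, and applied to $V(M_k) \setminus V(e_k)$, $u(V(M_k) \setminus V(e_k)) = w(M_k) - w(e_k)$. Second, the greedy-partition rule guarantees $w(M_k) - w(e_k) \le w(M_\ell)$ for every $\ell$, since at the step $e_k$ was placed, $M_k$ (before placement) had the smallest weight among all bundles and bundle weights only grow afterwards; in particular $w(M_j) - w(e_j) \le w(M_n)$. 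Third, if $i \in Q$ then by definition some agent $k$ envies $X_i$ by more than one item, so $w(M_k) < u(X_i \setminus \{v\})$ for every $v \in X_i$; choosing $v = v_{i1}$ together with $w(M_n) \le w(M_k)$ yields $u(Y_i) > w(M_n)$.

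With these in hand the case analysis is short. When $j \in Q$ I take $g = v_{j2}$, so that $Y_j \setminus \{g\} = V(M_j) \setminus V(e_j)$ and the first two observations give $u(Y_j \setminus \{g\}) = w(M_j) - w(e_j) \le w(M_n)$. When $j \notin Q$, the EF1-graph on $X$ has no in-edge to $j$; in particular agent $n$ does not envy $X_j$ by more than one item, so some $g \in X_j = Y_j$ satisfies $u(Y_j \setminus \{g\}) \le u(X_n) = w(M_n)$. Either way a valid $g \in Y_j$ with $u(Y_j \setminus \{g\}) \le w(M_n)$ has been produced. If $i \notin Q$ then $u(Y_i) = w(M_i) \ge w(M_n)$ and EF1 follows; if $i \in Q$, the third observation strengthens this to $u(Y_i) > w(M_n) \ge u(Y_j \setminus \{g\})$.

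The main obstacle, and the reason the algorithm bothers to delete a vertex for each $i \in Q$ rather than leaving the greedy partition as is, will be controlling $u(Y_i)$ when $i \in Q$: a priori the removal of $v_{i1}$ could collapse $u(Y_i)$ all the way down to $w(M_i) - w(e_i)$, which by the second observation is at most $w(M_n)$, too weak on its own to dominate $u(Y_j \setminus \{g\})$. The third observation is the compensating insight — the very membership $i \in Q$ forces $u(Y_i)$ strictly above $w(M_n)$, closing the gap. I would also write out the alternating-path exchange needed for the first observation carefully, since that is where the global maximality of $M^*$ in $G$ is invoked.
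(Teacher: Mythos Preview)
Your proposal is correct and follows essentially the same route as the paper. The paper organizes the argument as two claims---agent $n$ does not envy anyone for more than one item in $Y$, and no one envies agent $n$ in $Y$---and then sandwiches $u(Y_i)\ge u(Y_n)\ge u(Y_j\setminus\{g\})$; since $u(Y_n)=w(M_n)$, this is exactly your pivot through $w(M_n)$, and your four-case split and the paper's two claims unwind to the same inequalities. Your observation~1 (that $M_k\setminus\{e_k\}$ is a maximum matching in $G[V(M_k)\setminus V(e_k)]$, via an exchange with $M^*$) is the fact the paper uses implicitly when it asserts $u(V(M_n))\ge u(V(M_i)\setminus\{v_{i1},v_{i2}\})$ ``by the definition of greedy partition,'' so making it explicit is a small improvement in rigor rather than a different idea.
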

\begin{proof}
If $Q=\emptyset$, by definition, the allocation is already EF1.
In the following, assume $Q\neq\emptyset$.
Note that only the agents in $i\in Q$ has one vertex removed from $V(M_i)$ and for any $i\notin Q$, $Y_i = V(M_i)$.
Particularly, $Y_n=V(M_n)$.

Fix any $i \in Q$.
Let $(v_{i1},v_{i2})$ be the edge selected in Step \ref{step:EF1:n:for1:e}, i.e., the edge with the smallest weight in $M_i$.
By the definition of greedy partition, 
\begin{align}\label{eq:ef1:n:pef1}
    u(V(M_{n})) \ge u(V(M_{i}) \setminus \{v_{i1},v_{i2}\}).
\end{align}
We have the following claims. 

\begin{claim}
\label{claim:ef1:n:ef1:n}
Agent $n$ does not envy any agent $i \in N$ for more than one item in the partial allocation $Y$.
\end{claim}
\begin{proof}
The claim is straightforward if $i\notin Q$ since there is no edge between $n$ and $i$.
If $i\in Q$, then $Y_i = V(M_{i}) \setminus \{v_{i1}\}$ and by Inequality \eqref{eq:ef1:n:pef1},
\[
u(V(M_{n})) \ge u(V(M_{i}) \setminus \{v_{i1},v_{i2}\}) = u(Y_i\setminus\{v_{i2}\}),
\]
implying $n$ does not envy $Y_i$ for more than one item.
\end{proof}

\begin{claim}
\label{claim:ef1:n:ef1}
No agent $i \in N$ envies agent $n$ in $Y$.
\end{claim}
\begin{proof}
If $i\notin Q$, the bundles of agent $i$ and $n$ do not change in the {\bf for} loop in Step \ref{step:EF1:n:for1}. Since $M_n$ has the smallest weight in the greedy partition of $M^*$, we have
\[
u(Y_i)=u(V(M_i)) \ge u(V(M_n)) = u(Y_n).
\]
If $i\in Q$, since there is an edge from $n$ to $i$, we have
\[
u(Y_i) \ge \min_{v\in V(M_i)} u(V(M_i)\setminus\{v\}) > u(V(M_n)),
\]
which means $i$ does not envy $n$. 
\end{proof}
Combining Claims \ref{claim:ef1:n:ef1:n} and \ref{claim:ef1:n:ef1}, we have that for any two agents $i$ and $j$,
\[
u(Y_i) \ge u(Y_n) \ge u(Y_j \setminus\{v\}) \text{ for some $v\in Y_j$},
\]
which means $i$ does not envy $j$ for more than one item. 
This completes the proof of Lemma \ref{lem:ef1:n:ef1}.
\end{proof}




To prove the approximation ratio of Algorithm \ref{alg:EF1:n}, we need the following lemma.
\begin{lemma}\label{lem:ef1:n:mig3}
$|M_i| \ge 3$ for all $i\in Q$.
\end{lemma}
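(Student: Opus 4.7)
The plan is to argue the contrapositive: if $|M_i|\le 2$, then $i\notin Q$. Recall that $i\in Q$ means some agent $j$ envies $i$ by more than one item under the partial allocation $(V(M_1),\ldots,V(M_n))$, i.e.\ $u(V(M_j)) < u(V(M_i)\setminus\{v\})$ for every $v\in V(M_i)$. In particular, this forces
\[
u(V(M_j)) \;<\; \min_{v\in V(M_i)} u(V(M_i)\setminus\{v\}).
\]
I will do a short case split on $|M_i|\in\{0,1,2\}$ and show the right-hand side is at most $u(V(M_j))$ in each case, contradicting $i\in Q$.

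The cases $|M_i|=0$ and $|M_i|=1$ are immediate. If $|M_i|=0$, then $V(M_i)=\emptyset$ and no agent can envy $i$ at all. If $|M_i|=1$, then $V(M_i)$ consists of the two endpoints of a single edge, so removing either endpoint leaves a graph with no edges and utility $0$; thus $\min_{v} u(V(M_i)\setminus\{v\})=0\le u(V(M_j))$, ruling out envy by more than one item.

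The main case is $|M_i|=2$, and this is where the greedy partition structure is used. Write $M_i=\{e_1,e_2\}$ with $w(e_1)\ge w(e_2)$; since $M_i$ is a matching, $e_1,e_2$ are vertex-disjoint, so $\min_{v\in V(M_i)} u(V(M_i)\setminus\{v\}) = w(e_2)$ (achieved by deleting an endpoint of $e_1$). I now claim $w(M_j)\ge w(e_2)$ for every $j\ne i$. The key observation is: at the moment $e_2$ was assigned to $M_i$ during the greedy partition procedure, $M_i$ already contained $e_1$ and was (by the greedy rule) a bundle of minimum current weight $w(e_1)$; hence every other bundle $M_{j'}$ had weight at least $w(e_1)$ at that moment. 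Since bundle weights never decrease, we conclude $w(M_j)\ge w(e_1)\ge w(e_2)$ at the end, for every $j\ne i$. Thus $u(V(M_j))=w(M_j)\ge w(e_2)$, contradicting the envy-by-more-than-one-item condition. Note that $|M^*|>n$ is guaranteed when we reach the construction of the EF1-graph (otherwise Algorithm~\ref{alg:EF1:n} returns at Step~\ref{step:ef1:n:1}), so every $M_j$ is nonempty and this argument applies uniformly.

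The only subtle point, and the one I would want to write out carefully, is the greedy-partition argument in the last case: one must verify that when $e_2$ is placed into $M_i$, the bundle $M_i$ indeed realizes the current minimum weight so that all other bundles have weight $\ge w(e_1)$ at that instant, and then observe that subsequent additions only increase weights. Everything else is immediate, and together the three cases give $|M_i|\ge 3$ for every $i\in Q$.
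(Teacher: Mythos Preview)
Your greedy-partition argument showing $w(M_j)\ge w(e_1)$ for every $j\neq i$ is correct and matches what the paper uses. The gap is in the other half of the $|M_i|=2$ case: the assertion
\[
\min_{v\in V(M_i)} u\bigl(V(M_i)\setminus\{v\}\bigr)=w(e_2)
\]
is not justified, and in general it is false. Recall that $u(S)$ is the weight of a maximum matching in the \emph{induced subgraph} $G[S]$, not the weight of the matching $M_i$ restricted to $S$. The four vertices of $V(M_i)$ may span edges of $G$ other than $e_1,e_2$, and after deleting an endpoint of $e_1$ the three remaining vertices may support an edge of weight strictly larger than $w(e_2)$. For a concrete instance, take $e_1=(a,b)$ with $w(e_1)=10$, $e_2=(c,d)$ with $w(e_2)=5$, and add edges $(a,c),(b,c)$ of weight $6$ (and $(a,d),(b,d)$ of weight $0$). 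Then $\{e_1,e_2\}$ is still a maximum matching in $G[V(M_i)]$, yet removing $a$ or $b$ leaves a triangle whose best edge has weight $6>w(e_2)$, so the minimum over $v$ is $6$, not $5$.

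What actually needs to be shown---and what the paper proves---is the weaker bound $\min_v u(V(M_i)\setminus\{v\})\le w(e_1)$, which combined with your $w(M_j)\ge w(e_1)$ finishes the case. The paper obtains this by using that $\{e_1,e_2\}$ is a maximum-weight matching in $G[V(M_i)]$: any edges of weight exceeding $w(e_1)$ among the four vertices must be incident to a single common vertex $\bar v$ (otherwise two of them would form a matching beating $e_1+e_2$), and removing that $\bar v$ leaves three vertices whose single-edge matching has weight at most $w(e_1)$. Plugging this structural observation into your argument closes the gap.
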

\begin{proof}
If the in-degree of agent $i$ is non-zero, then agent $n$ must envy $i$ for more than one item, and 
\begin{align}\label{eq:ef1:mig3}
    u(Y_n) < u(V(M_i)\setminus\{v\}) \text{ for any $v\in V(M_i)$}.
\end{align}
First, it is easy to see that $|M_i| \neq 1$ since the removal of any node $v$ makes the remaining utility be 0 and thus Equation \eqref{eq:ef1:mig3} does not hold.

Next we show  $|M_i| \neq 2$.
For the sake of contradiction, assume $M_i=\{e,e'\}$ with $e=(v_1,v_2)$ and $e'=(v'_1,v'_2)$.
Without loss of generality, we further assume $w(e) \ge w(e')$.
Then it must be that $w(M_n) \ge w(e)$, otherwise $e'$ cannot be added to $M_i$.
Note that since $M^*$ is a maximum weighted matching in $G$, $\{e,e'\}$ must be a maximum weighted matching in $G[M_i]$.
If there exist edges in $G[M_i]$ whose weights are greater than $w(e)$, these edges must be adjacent to the same node, denoted by $\bar{v}$; otherwise they can form another matching with weight greater than $w(M_i)$. 
Thus by removing $\bar{v}$ from $G[M_i]$, the maximum matching in the remaining graph contains at most one edge, and all the remaining edges have weight at most $w(e)$, which means the maximum matching in $G[V(M_i)\setminus\{\bar{v}\}]$ brings utility no larger than $w(e)$. 
Therefore,
\[
u(V(M_i)\setminus\{\bar{v}\}) \le w(e) \le w(M_n),
\]
which is a contradiction with Equation \eqref{eq:ef1:mig3}.
Combining the above two cases, we have $|M_i| \ge 3$.
\end{proof}



Based on the claims and lemmas presented above, we present the proof of Theorem \ref{thm:ef1:n} below.

\begin{proof}[Proof of Theorem \ref{thm:ef1:n}]
Let $(X_1,\cdots,X_n)$ be the allocation returned by Algorithm \ref{alg:EF1:n}.
If the allocation is from Step \ref{step:ef1:n:1}, then it must be EF1.
This is because $X_n$ has the smallest value and thus nobody envies $n$ and each of $X_i$ with $1\le i\le n-1$ contains only two nodes which means the removal of one of them brings utility 0 to any agent.
It also achieves the optimal social welfare since all edges in $M^*$ are allocated to some agents. 

Next we consider the case when the allocation is obtained from Step \ref{step:ef1:n:2}. 
By Lemma \ref{lem:ef1:n:ef1}, after the {\bf for} loop in Step \ref{step:EF1:n:for1}, the partial allocation is EF1. 
To show the final allocation to be EF1, it suffices to show that the {\bf for} loop in Step \ref{step:EF1:n:for2} preserves EF1. 
This is true as in each round, only the bundle with the smallest value can be allocated one more item whose removal makes it smallest again. 


Finally, we consider the social welfare loss. 
For each agent $i\in Q$, we observe that at most one node will be removed from $V(M_i)$ in the {\bf for} loop in Step \ref{step:EF1:n:for1} and the {\bf for} loop in Step \ref{step:EF1:n:for2} can only increase $i$'s utility. 
Since the removed node $v_{i1}$ is from the edge with the smallest weight in $M_i$, by Lemma \ref{lem:ef1:n:mig3}, we have
\[
u(X_i) \ge \frac{2}{3} \cdot u(V(M_i)) \text{ for all $i\in N\setminus\{n\}$},
\]
Moreover, for agent $n$ and any $i\neq n$,
\begin{align}\label{eq:ef1:n:sw}
    u(X_n) \ge u(M_i\setminus \{(v_{i1},v_{i2})\})\ge \frac{2}{3} \cdot u(V(M_i)).
\end{align}
Therefore
\begin{align*}
    \frac{\sum_{i\in N}u(X_i)}{\sw^*} 
    &\ge \dfrac{\sum_{i\in N\setminus\{n\}} \frac{2}{3}\cdot u(V(M_i)) + u(V(M_n))}{\sum_{i\in N} u(V(M_i))} \\
    &=\frac{2}{3} + \frac{1}{3} \cdot \dfrac{u(V(M_n))}{\sum_{i\in N} u(V(M_i))}\\
    &\ge\frac{2}{3} + \frac{1}{3} \cdot \dfrac{u(V(M_n))}{ (\frac{3}{2} (n-1)+1)\cdot u(V(M_n))}\\
    &=\frac{2}{3} + \frac{2}{9n-3},
\end{align*}
where the second inequality is because of Inequality \eqref{eq:ef1:n:sw} and we complete the proof of Theorem \ref{thm:ef1:n}.
\end{proof}

\paragraph{Tight Example.}
We show that the analysis in Theorem \ref{thm:ef1:n} is asymptotically tight.
Consider the example in Figure \ref{figure20220112}, where $2^-$ means $2-\epsilon^2$ and $4^+$ means $4+3(n-1)\epsilon^2$.
Let $\epsilon>0$ be a sufficiently small number, say $1/n^2$.
The maximum matching $M^*$ contains all the bold edges and $\sw^*=w(M^*) = 6(n-1)+4$.
By Algorithm~\ref{alg:EF1:n}, the greedy-partition of $M^*$ is $(M_1,\cdots,M_n)$ as shown in Figure \ref{figure20220112}.
However, it is not EF1: for $1\le i\le n-1$, by removing any vertex from $M_i$, the maximum matching in the remaining graph has weight at least $2+\epsilon+2^- > 4+ 3(n-1)\epsilon^2 = w(M_n)$.
After the {\bf for} loop in Step \ref{step:EF1:n:for1} in Algorithm \ref{alg:EF1:n}, for $1\le i\le n-1$, one vertex in each $M_i$ is removed and is reallocated to $M_n$ in the {\bf for} loop in Step \ref{step:EF1:n:for2}.
Thus the remaining social welfare is at most 
\[
2 \cdot (2+\epsilon)\cdot  (n-1) + 4 \to \frac{2}{3}\cdot \sw^*.
\]

\begin{figure}
\centering
\includegraphics[width=2.8in]{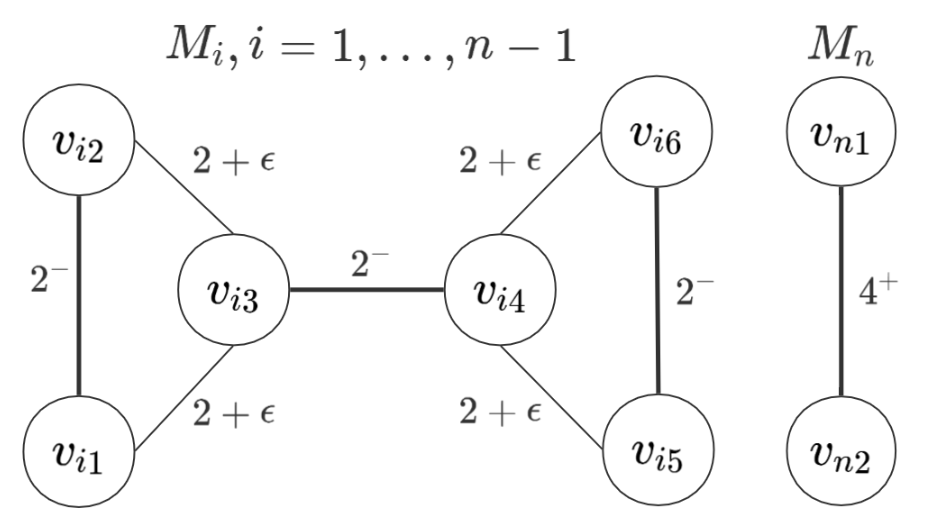}
\caption{A graph contains $n$ connected components where the first $n-1$ components are identical as shown by $M_i,i=1,\cdots,n-1$, and the last component is a single edge as shown by $M_n$. }\label{figure20220112}
\end{figure}

\paragraph{Remark.}
By Theorem \ref{thm:ef1:n}, if $n=2$, the approximation ratio is 4/5 and when $n\to \infty$ the approximation ratio is 2/3.
Unfortunately, we were not able to prove an upper bound where the optimal social welfare cannot be achieved by any EF1 allocation.
We conjecture that there is always an EF1 allocation that achieves the optimal social welfare $\sw^*$.
\section{Conclusion and Future Directions}\label{sec:Extension}

In this work, we study the fair (and efficient) allocation of graphical resources when the agents' utilities are determined by the weights of the maximum matchings in the obtained subgraphs. 
We provide a string of algorithmic results regarding MMS and EF1, but also leave some problems open. 
For example, regarding MMS, we can further improve the approximation ratio when the agents are homogeneous, and prove inapproximability results;
regarding EF1, the approximation ratio to the optimal social welfare for binary weight functions and homogeneous agents can be potentially improved. 

Our work also uncovers some other interesting future directions. 
First, regarding MMS, although we show that there is no bounded multiplicative approximation, it may admit good additive or bi-factor approximations.
Second, we only focus on the matching-induced utilities in this work, and it is intriguing to consider other combinatorial structures such as independent set, network flow and more.
Third, we can extend the framework to the fair allocation of graphical chores when the agents have costs to complete the items, and the asymmetric situation when the agents have possibly different entitlements to the system.

{
\bibliographystyle{named}
\bibliography{name1}
}

\newpage
\end{document}